\def\ps@pprintTitle{%
 \let\@oddhead\@empty
 \let\@evenhead\@empty
 \def\@oddfoot{\centerline{\thepage}}%
 \let\@evenfoot\@oddfoot}
\tikzstyle{state}=[circle,thick,draw=black!75,fill=black!20,minimum size=6mm, text=black]
\pgfplotsset{compat=1.7,
        /pgfplots/ybar legend/.style={
        /pgfplots/legend image code/.code={%
        %\draw[##1,/tikz/.cd,yshift=-0.25em]
                %(0cm,0cm) rectangle (3pt,0.8em);},
        \draw[##1,/tikz/.cd,bar width=3pt,yshift=-0.2em,bar shift=0pt]
                plot coordinates {(0cm,0.8em)};},
},
}
\newcommand{\mc}{\mathcal}
\renewcommand{\vec}[1]{\boldsymbol{#1}}
\newcommand{\dfn}{\colonequals}
\newcommand{\clhead}{\tt{C}}
\newcommand{\destination}{\tt{Rqs}}
\newcommand{\attacker}{\tt{A}}
\newcommand{\noos}{\Lambda}
\newcommand{\os}{\lambda}
\newcommand{\data}{\tt{D}}
\newcommand{\maximinA}{\mixedstratA^{\dagger}}
\newcommand{\maximinD}{\mixedstratD^{\dagger}}
\newcommand{\1}{\langle}
\newcommand{\2}{\rangle}
\newcommand{\reply}{\textsc{Reply}}
\newcommand{\SSE}{\mathrm{SSE}}
\newcommand{\NE}{\mathrm{NE}}
\newcommand{\BR}{\mathrm{BR}}
\newcommand{\mixed}{(\mixedstratD,\mixedstratA)}
\newcommand{\mixedNE}{\mixedstratD^{\NE},\mixedstratA^{\NE}}
\newcommand{\setsse}{\Omega^{\SSE}}
\newcommand{\setne}{\Omega^{\NE}}
\newcommand{\setmaximin}{\Omega^{\mathrm{maximin}}}
\newcommand{\nzs}{\Gamma}
\newcommand{\zs}{\Gamma_0}
\newcommand{\mixedstratD}{\vec{\rho}}
\newcommand{\mixedstratA}{\vec{\mu}}
\renewcommand{\arraystretch}{1.5}
\definecolor{Gray}{gray}{0.88}
\newtheorem{thm}{Theorem}
\newtheorem{lemma}{Lemma}
\newtheorem{definition}{Definition}
\definecolor{halfgray}{gray}{0.75}
\newcolumntype{P}[1]{>{\centering\arraybackslash}p{#1}}
\newcolumntype{M}[1]{>{\centering\arraybackslash}m{#1}}
\tikzset{
    %Define standard arrow tip
    >=stealth',
    %Define style for boxes
    punkt/.style={
           ellipse,
           rounded corners,
           draw=black, 
           text width=2.5em,
           minimum height=2em,
           text centered},
    % Define arrow style
    pil/.style={
           ->,
           shorten <=2pt,
           shorten >=2pt,},
    pil1/.style={ellipse,draw = white}
}
\journal{Ad Hoc Networks}
\begin{document}
\begin{frontmatter}

\title{Game Theoretic Path Selection to Support Security in Device-to-Device Communications}

\author[uob]{Emmanouil Panaousis}
\author[cpt]{Eirini Karapistoli}
\author[uog]{Hadeer Elsemary}
\author[uom]{Tansu Alpcan}
\author[qm]{MHR Khuzani}
\author[uomg]{Anastasios A.~Economides}

\address[uob]{University of Brighton, UK} 
\address[cpt]{Capritech Limited, UK}
\address[uog]{University of Gottingen, Germany}
\address[uom]{University of Melbourne, Australia}
\address[qm]{Queen Mary University of London, UK}
\address[uomg]{University of Macedonia, Greece}
%\address{Corresponding author: Eirini Karapistoli (irene@capritech.co.uk)}

\begin{abstract}
\footnote{\copyright$\langle$2016$\rangle$. This manuscript version is made available under the CC-BY-NC-ND 4.0 license http://creativecommons.org/licenses/by-nc-nd/4.0/. \\
DOI: 	10.1016/j.adhoc.2016.11.008.}
Device-to-Device (D2D) communication is expected to be a key feature supported by 5G networks, especially due to the proliferation of Mobile Edge Computing (MEC), which has a prominent role in reducing network stress by shifting computational tasks from the Internet to the mobile edge.~Apart from being part of MEC, D2D can extend cellular coverage allowing users to communicate directly when telecommunication infrastructure is highly congested or absent.~This significant departure from the typical cellular paradigm imposes the need for decentralised network routing protocols.~Moreover, enhanced capabilities of mobile devices and D2D networking will likely result in proliferation of new malware types and epidemics.~Although the literature is rich in terms of D2D routing protocols that enhance quality-of-service and energy consumption, they provide only basic security support, e.g., in the form of encryption.~Routing decisions can, however, contribute to collaborative detection of mobile malware by leveraging different kinds of anti-malware software installed on mobile devices.~Benefiting from the cooperative nature of D2D communications, devices can rely on each other's contributions to detect malware.~The impact of our work is geared towards having more malware-free D2D networks.~To achieve this, we designed and implemented a novel routing protocol for D2D communications that optimises routing decisions for explicitly improving malware detection.~The protocol identifies optimal network paths, in terms of malware mitigation and energy spent for malware detection, based on a \emph{game theoretic model}.~Diverse capabilities of network devices running different types of anti-malware software and their potential for inspecting messages relayed towards an intended destination device are leveraged using game theoretic tools.~An optimality analysis of both Nash and Stackelberg security games is undertaken, including both zero and non-zero sum variants, and the Defender's equilibrium strategies.~By undertaking network simulations, theoretical results obtained are illustrated through randomly generated network scenarios showing how our protocol outperforms conventional routing protocols, in terms of expected payoff, which consists of: \emph{security damage inflicted by malware} and \emph{malware detection cost}.
\end{abstract}

\begin{keyword}
Device-to-Device (D2D) communications \sep iRouting protocol \sep Malware detection games \sep Game theory.
\end{keyword}

\end{frontmatter}

\section{Introduction}
Demand for anytime-anywhere wireless broadband connectivity and increasingly stringent Quality of Service (QoS) requirements pose new research challenges.~As mobile devices are capable of communicating in both cellular (e.g.~4G) and unlicensed (e.g.~IEEE 802.11) spectrum, the Device-to-Device (D2D) networking paradigm has the potential to bring several immediate gains.~Networking based on D2D communication \cite{Feng:IEEEComms:2014,Nishiyama:IEEEComms:2014,Tehrani:IEEEComms:2014,Fodor:IEEEComms:2012,Doppler:IEEEComms:2009} not only facilitates wireless and mobile peer-to-peer services, but also provides energy efficient communications, locally offloading computation, offloading connectivity, and high throughput.~The most emerging feature of D2D is the establishment and use of multi-hop paths to enable communications among non-neighbouring devices.~In multi-hop D2D communications, data are delivered from a source to a destination via intermediate (i.e.~relaying) devices, independently of operators' networks.~

\subsection{Motivation}
To motivate the D2D communication paradigm, we emphasise the need for \emph{localised applications}.~These run in a collaborative manner by groups of devices at a location where telecommunications infrastructures: (i) are not present at all, e.g.~underground stations, airplanes, cruise ships, parts of a motorway, and mountains; (ii) have collapsed due to physical damage to the base stations or insufficient available power, e.g.~areas affected by a disaster such as earthquake; or (iii) are over congested due to an extremely crowded network, e.g.~for events in stadiums, and public celebrations.~Furthermore, relay by device can be leveraged for commercial purposes such as advertisements and voucher distributions for instance in large shopping centres.~This is considered a more efficient way of promoting businesses than other traditional methods such as email broadcasting and SMS messaging due to the immediate identification of the clients in a surrounding area.~Home automation and building security are another two areas that multi-hop data delivery using D2D communications is likely to overtake our daily life in the near future while multi-hop D2D could be also leveraged towards the provision of anonymity against cellular operators \cite{Ardagna:IEEETSC:2014}.

A key question related to multi-hop D2D networks is, \emph{which route should the originator of some data choose to send it to an intended destination?}.~This has been exhaustively investigated in the literature of wireless and mobile ad hoc routing with well-known protocol to be among others AODV \cite{Perkins:2003:AODV:RFC3561}, DSR \cite{Johnson:2001:DSR}, and OLSR \cite{Clausen:2003:OLSR:RFC3626}.~A thorough survey of standardisation efforts in this field has been published by Ramrekha et al.~\cite{RPP13}.

Due to the myriad number of areas D2D communications are applicable to, devices are likely to be an ideal target for attackers who aim to infect devices with malware.~Authors in \cite{malwaresurvey2014} point out that malware in current smartphones and tablets have recently rocketed and established its presence through advanced techniques that bypass security mechanisms of devices.~Malware can spread, for instance, through a Multimedia Messaging System (MMS) with infected attachments, or an infected message received via Bluetooth aiming at stealing users' personal data or credit stored in the device.~An example of a well-known worm that propagates through Bluetooth was Cabir, which consists of a message containing an application file called {\tt caribe.sis}.~Apart from malware infection, Khuzani et al.~\cite{armanmalwareepidemics} have investigated outbreaks of malware (i.e.~malware epidemics) mainly by adopting the notion of D2D communication.~Finally, social engineering attacks against mobile phones is one of the most serious threats, as presented in a relevant survey here \cite{heartfield2016taxonomy}.~For thorough surveys on mobile malware one may refer to \cite{malwaresurvey2014,LaPolla:IEEECST:2013}.

\subsection{Innovation}
In a nutshell,this paper presents a novel routing protocol, for D2D communications, that supports malware detection in an optimal way by using non-cooperative \emph{game theoretic} tools, which have been extensively used in the security literature (e.g.~\cite{Alpcan:Book:2012}) and in D2D routing (e.g.~\cite{Naserian:AdHoc:2009}).~Game theory has also been used for other than routing purposes~\cite{xiao2015bayesian}, \cite{Long:AdHoc:2011,Wang:AdHoc:2009} in D2D networks.~In this paper we only focus on security games and we tackle a decision-making routing challenge, in D2D networks, in presence of an adversary who injects malware into the network, after she has compromised a gateway that connects the D2D network with the cloud.~This assumption is fairly realistic given the vast power attackers have in their hands these days to successfully exploit vulnerabilities of modern gateways.~Our underlying network has been inspired by the \emph{Mobile Edge Computing} (MEC) (also refer to as Fog Computing) paradigm as a step towards addressing security within the realm of an increasingly important area of 5G.

Our protocol, called $i$Routing (abbreviating ``intelligent Routing''), is designed upon the theoretical analysis of a simple yet illuminating two-player security game between the \emph{Defender}, which abstracts a D2D network, and the \emph{Attacker}, which abstracts any adversarial entity that wishes to inject malware into the D2D network.~We have proven that the Defender's \emph{equilibrium strategies} leave the network better off, in terms of \emph{expected payoff}, which is a combination of \emph{security damage} and \emph{malware detection cost} (i.e.~cycles process units).~Note that $i$Routing can work on top of underlying physical and MAC layer protocols \cite{Jianting:IEEEComms:2013,Daohua2014}.

It is worth noting that this paper does not tackle secure routing issues in traditional ways.~For a survey of secure routing protocols for wireless ad hoc networks, see \cite{Abusalah:IEECST:2008,Gupte:AdHoc:2003}.~Such protocols mainly aim at enabling confidentiality, and integrity of the communicated data and they do not consider underlying collaborative malware detection.

\subsection{Progress beyond relevant work}
This paper extends, in a significant manner, the results initially presented in \cite{Panaousis:Gamesec:2014}.~The exact differences are summarized below.

\begin{itemize}
\item \cite{Panaousis:Gamesec:2014} assumes a pure device-to-device network while in this paper the device-to-device network has been enriched with a part of mobile edge computing.~The network devices request services from the MEC server and multi-hopping enables communication between the MEC server and the different devices to overcome proximity issues due to the latter being outside the transmission range of the server.~In this paper, the security challenge is how to safely utilise MEC services where a cluster-head (i.e.~MEC server) might be compromised by an adversary.~Although this does not introduce any new challenge in terms of malware detection and routing, it is an assumption that places the idea of the paper within mobile edge computing and 5G architectures.
	
\item This paper assumes different mobile operating systems and these can be infected with different types of malware as opposed to \mbox{\cite{Panaousis:Gamesec:2014}}, which goes as far as considering just a set of malicious messages that are sent from the attacker's device to infect the legitimate devices.~This also has the effect of defining, in this paper, the Malware Detection Game whereas in \cite{Panaousis:Gamesec:2014}, the defined game is called Secure Message Delivery Game.
	
\item In \cite{Panaousis:Gamesec:2014}, a confusion matrix is defined to determine how the different devices of the network can detect malicious messages.~In this paper here we take a more realistic, in the terms of cyber security, approach where for each device there is a probability to be compromised by malware.~Therefore, each route has, in turn, a penetration level, which is the probability the route to be compromised due to one or more devices on it being vulnerable.
	
\item In \cite{Panaousis:Gamesec:2014}, the details about the interdependencies of malicious message detectors is not discussed, while in our paper here we explicitly say that each control detects different signs of malware and \emph{no interdependencies}, in terms of detection capabilities, are assumed, i.e.~we have assumed that an anti-malware control is the minimal piece of software that detects certain malicious signs.
	
\item In \cite{Panaousis:Gamesec:2014}, the Attacker is not assumed to monitor the network before launching a malware attack (no reconnaissance) while in our paper here the Attacker surveils the network before injecting malware giving us a Stackelberg game to study.
	
\item In \cite{Panaousis:Gamesec:2014}, only Nash Equilibria (NE) and maximin strategies have been studied.~On the other hand, our paper here derives Strong Stackelberg Equilibria (SSE) and shows the relationship among three of them; SSE, NE and maximin.~Not only that, but this paper exhibits much larger depth of mathematical analysis referring also to best responses of players.~Finally, it proves the equality of strategies of different games, such zero-sum and non-zero sum across all strategic types (Nash, Stackelberg, maximin).
	
\item Although Panaousis et al. \cite{Panaousis:Gamesec:2014} has investigated both zero sum and non-zero sum games, where in the latter the utility of the Attacker is a positive affine transformation (PAT) of the defender's utility, in this paper we go beyond that.~We show the equality of the different strategies holds in a more generic (i.e.~than the PAT case) payoff structure where the Attacker’s utility is a strictly positive scaling of the Defender's utility.	
	
\item All simulations in \cite{Panaousis:Gamesec:2014} were numeric; as well as they do not compare the performance of the proposed routing protocol with other device-to-device routing protocols.~For the purposes of our paper here we have undertaking a network simulation to compare the proposed protocol with legacy routing protocols using the OMNeT++ network simulator.~In this way we have simulated physical and link-layer network characteristics.

\item In our paper here, we have considered, in our simulations, the efficacies of some of the most-recent real-world anti-malware controls against real-world malware types as opposed to the purely numeric assignment to the different variables.
	
\item In our simulations here, we have included a new Attacker type, called Weighted, which allows the adversary to distribute her resources proportionally, over the different routes, aiming at the highest expected damage.~This type of Attacker was not simulated in \cite{Panaousis:Gamesec:2014}.
\end{itemize}

\subsection{Main assumptions}
Our analysis assumes that each device has some malware detection capabilities (e.g.~anti-malware software).~Therefore, a device is able to detect malicious application-level events.~In other words, each device has its own detection rate which contributes towards the overall detection rate of the routes that this device is part of.~In order to increase malware detection, the route with the highest detection capabilities must be selected to relay the message to the destination.

However, due to the different malware types available to attackers, these days, such a decision is not trivial.~One could argue that if we know the probability of a malware type to be chosen, we can develop a proportional routing strategy, which will distribute security risks across the different routes by choosing routes in a proportional, to their malware detection capabilities, manner.~Since this knowledge can not be taken for granted in addition to the volatile nature of such statistics, in this paper we use game theory to optimise routing decisions to support malware detection in D2D networks, regardless of the probability of the different malware to be used by the Attacker.~

\subsection{Outline}
The remainder of this paper is organised as follows: In Section \ref{related_work}, we review related work with more emphasis to be given in papers at the intersection of game theory, security, and routing for wireless ad hoc networks (i.e.~prominent example of D2D networking).~In Section \ref{system}, we present the system and game models, while in Section \ref{solutions}, we devise game solutions.~In Section \ref{analysis}, we undertake optimality analysis which leads to a list of theoretic contributions.~Section \ref{irouting} describes, in detail, the $i$Routing protocol, and in Section \ref{simulations}, we compare $i$Routing against other routing protocols.~Finally, Section \ref{conclusion} provides concluding remarks and points towards future research.

\section{Related work}
\label{related_work}

In this section,~we briefly review the state-of-the-art,~in chronological order,~in terms of game theoretic approaches at the intersection of three fields: security,~routing,~and device-to-device networks.~Another set of game theoretic works that focus on optimising intrusion detection strategies per se than adjusting routing decisions to optimally support intrusion detection,~consist of papers such as \cite{Patcha:IWIA:2004},~\cite{Liu:GAMENETS:2006},~\cite{Liu:IJSN:2006},~\cite{Liu:IJSN:2006},~\cite{Marchang:ADCOM:2007},~\cite{Otrok:ICDCSW:2009},~\cite{Santosh:ADCOM:2008},~and \cite{Cho:IEEEJR:2010}.~Our work is complementary to this literature as it optimises end-to-end path selections,~in terms of malware detection efficacy and computational effort.

Looking more into decision regarding packet forwarding by using game theoretic tools and without incentive mechanisms in place,~Felegyhazi et al.~\cite{Felegyhazi:TMC:2006}~have studied the Nash equilibria of packet forwarding strategies with tit-for-tat punishment strategy in an iterative game.~In each stage (i.e.~time slot) of the game,~each device selects its cooperation level based on the normalised throughput it experienced in the previous stage.~As opposed to $i$Routing,~the authors do not propose a new end-to-end routing protocol; instead they consider a shortest path algorithm.~Also,~they assume the existence of internal malicious or selfish nodes in contrast to our work here,~which models an adversary outside of the D2D cluster,~who aims to infect legitimate devices with malware.

In a more security-oriented vein,~Yu et al.~\cite{Yu:IEEEJMC:2007} have used game theory to study the dynamic interactions,~in mobile ad hoc (device-to-device) networks,~between ``good'' nodes,~which initially believe that all other nodes are not malicious,~and ``adversaries'',~which are aware of which nodes are good.~They propose secure routing and packet forwarding games that consist of 3 stages: route participation; route selection; and packet forwarding.~In the first stage,~a node decides whether to be part of route or not; in the second phase,~a node who wishes to send a packet to  a destination,~after it discovers a \emph{valid route} (called when all nodes agree to be part of it),~it either uses the discovered route or not; and,~finally,~in the third phase,~each relay node decides to forward or not an incoming packet.~They have derived optimal defence strategies and studied the maximum potential damage,~which incurs when attackers find a route with maximum number of hops and they inject malicious traffic into it.~The same authors also combined this game with a secure routing game but without considering noise and imperfect monitoring.~Yu et al.~\cite{Yu:IEEEJIFS:2007} extended \cite{Yu:IEEEJMC:2007} and proposed a secure cooperation game under noise and imperfect monitoring.~Likewise,~Yu and Liu tackled the same challenge and presented a richer set of performance evaluation results in \cite{Yu:IEEEJIFS:2008}.~The above publications do not tackle the same challenge with $i$Routing,~as they do not investigate the selection of a route among an available set of routes to deliver packets from a source to a destination

Finally,~in \cite{Panaousis:LCN:2009},~Panaousis and Politis present a routing protocol that respects the energy spent by intrusion detection on each route and therefore prolonging network lifetime.~This paper takes a simple approach,~according to which the attacker either attacks or not a route,~and the Defender,~likewise,~decides whether to allocate resources to defend or not.

None of the aforesaid protocols consider the propagation of malware within the network and none of these works investigates Stackelberg games, which basically assume that the Attacker conducts surveillance before deciding upon her strategy.~This is a reasonably realistic assumption when looking at the intelligence of cyber hackers and it is a conventional decision in other security related fields \cite{korzhyk2011,tambe2011,wang2013,Kar:AAMAS:2015}.

\section{System description and game model}
\label{system}

This section presents our underlying system model along with its components.~Mobile-edge computing (MEC) is an emerging paradigm that allows mobile applications to offload computationally intensive workloads to a MEC server.~This introduces a new network architecture concept that provides  cloud-computing capabilities at the edge of the mobile network.~The MEC server is likely to be setup by a service provider to ensure that it can provide a service environment with very low latency and high-bandwidth.

\subsection{System description}

We use a motivational paradigm demonstrating how D2D communication can be combined with a MEC architecture \cite{bonomi2012fog}, as depicted in Fig.\,\ref{fig:network}.~In our model, MEC is an intermediate layer between a \emph{D2D cluster} and the \emph{cloud}, aiming at \emph{low-latency service delivery} from the latter to the former, and it can serve users by using local short-distance high-rate connections.~The intermediate layer can contain a number of deployed MEC servers aiming to handle the localised requests issued by cluster users.

We assume that devices within a cluster can communicate in a D2D manner: directly or by using multi-hop routes.~The cluster is formed based on discovery protocols that run in each device. These allow to sense the environment and create a list of one-hop neighbours in order to be able to communicate should any request to forward data or a direct request be sent. We also assume no cellular infrastructure within the cluster, which means that devices can only communicate in a device-to-device fashion.

It is envisaged that such scenarios will be very common in 5G ecosystems where heterogeneous wireless technologies (e.g.~NB-LTE, WiFi, ZigBee, Bluetooth) will facilitate D2D communication \cite{Tehrani:IEEEComms:2014}.~For example, a device that seeks some data, can request this from other devices in its cluster, and if the \textsc{Request} cannot be served the MEC servers must be contacted to assist with the discovery of this data.

The idea here is that a MEC server is dedicated to provide predefined service applications to cluster users without the need to communicate with the cloud so that it accelerates responses while \emph{``pushing'' the cloud away of the user}.~We assume that each D2D cluster has a \emph{cluster-head} \cite{Asadi:IEEE:2014}, which is a device that communicates with the MEC servers. The main functionalities of a cluster-head are (i) to forward the \textsc{Request} of a device to the MEC servers, and (ii) upon its response, to transmit the \textsc{Reply} back to the requestor.~In this work, the cluster-head can be any device of the cluster. The MEC server is expected to talk to both the cloud servers and the cluster-head to handle functionalities such as device identifier allocation, call establishment, UE capability tracking, service support, and mobility tracking. Note that the election of the cluster-head is not investigated in this paper and also this paper is not concerned about deciding the nature of the cluster-head.

\subsection{Adversarial model}
As any open wireless environment, akin to one described in this paper, can be a target of adversaries.~More specifically, in this paper, we assume the existence of a malicious device, called \emph{the Attacker}, that can launch a Man-In-the-Middle (MITM) attack by \emph{hijacking the link} between the cluster-head and MEC servers.~Our analysis adopts the Dolev-Yao model \cite{dolev83}.~According to this, the D2D network, along with its established connection with the MEC servers, is represented as a \emph{set of abstract entities} that exchange messages.~Yet, the adversary is capable of overhearing, intercepting, and synthesising any message and she is only limited by the constraints of the deployed cryptographic methods.~We enrich this adversarial model by considering ``compromised MEC servers''.~This is to say that the adversary per se could be inside a \emph{legitimate MEC server} interacting with the cluster-head by using valid credentials and having \emph{privileged access} to MEC servers.~In this way, the adversary can inject a fake \textsc{Reply}, crafted with \emph{malware}, and send it back to the data requestor aiming at infecting her device.

\subsection{Malware detection}
In this adversarial environment, we envisage the use of anti-malware controls running in each device.~These can be responsible for \emph{scanning network traffic} for patterns to detect known malicious attempts.~Each device may even respond to newly detected attack methods (anomaly-based detection).~Upon detection, devices can block messages that are likely to consist of insecure content preventing, in this way, the spread of malware to other devices within their cluster.~This assumption can be seen as an advanced application of the \emph{next-generation firewalls} to mobile devices.~Although in this paper we assume that any detected malice is blocked by the device that has successfully undertaken the inspection, our work can be extended to support collaborative (e.g.~reputation-based) filtering towards blocking messages that end up having a bad reputation.~Such an approach can take advantage of \emph{learning} techniques and its investigation will be part of our future work.

\begin{figure}[t]
\centering
\includegraphics[width=380pt]{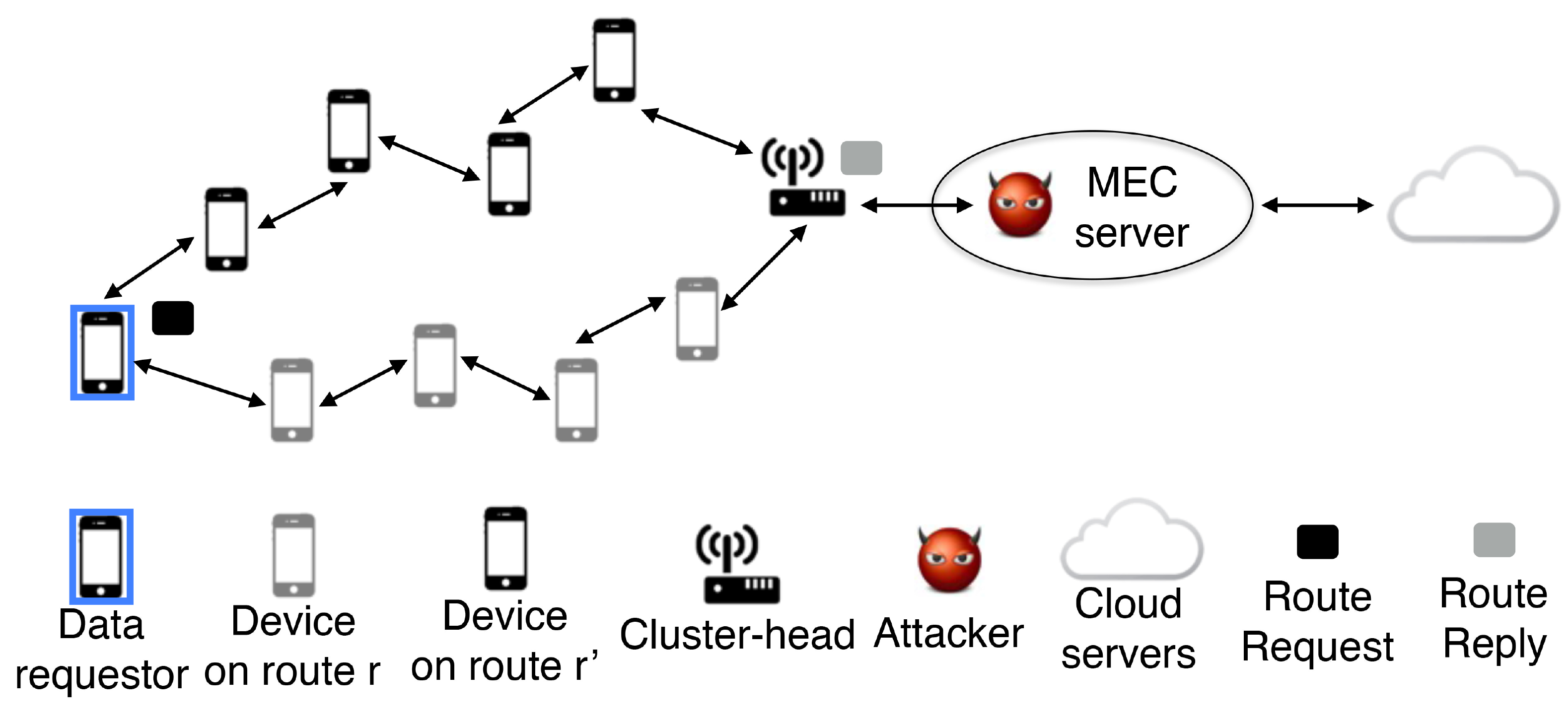}
\caption{Investigated system model, where a device requests data, that the cluster devices do not possess, from the MEC server.~The adversary has successfully launched a MITM attack controlling the communication between cluster-head and MEC server.}
\label{fig:network}
\end{figure}

\subsection{Formulation}
Let us assume a cluster of $N$ devices.~We denote by $\clhead$ its cluster-head, and by $\destination$ the requestor of some data.~Henceforth we will refer to this data as $\data$.~If the latter can not be found within the cluster itself, $\destination$ must seek $\data$ hosted by the MEC servers of its cluster.~Thus, $\clhead$ receives a \textsc{Request} from $\destination$, and it then queries the MEC server.

When $\clhead$ receives back a \textsc{Reply} from the MEC server and $\destination$ is not within its transmission range, a route $r$ must be established to deliver $\data$ from $\clhead$ to $\destination$.~Therefore, there is a need for the devices to relay $\data$ towards $\destination$, but before that, $\clhead$ \emph{must decide upon $r$}.~We assume $R$ routes available between $\clhead$ and $\destination$, we denote by $r_j \in [R]$, the $j$th route, and the set of devices that constitute $r_j$ are expressed by $\mc{S}_j$.~Note that we use the notation $[\Xi]$ to represent the set of $\Xi$ elements.

Although the route selection can be entirely taken based on quality-of-service parameters optimising network delay and jitter, the presence of an Attacker, let it be $\attacker$, introduces uncertainty with regards to the malice of the data conveyed toward $\destination$.~For instance, if $\attacker$ controls the link $\clhead \Longleftrightarrow$~MEC, then $\data$ can be anything including malware.~If this is the case, $\destination$, which trusts $\clhead$, is very likely to be infected by this malware.~In this paper, the infection risk depends on the likelihood the malware to be collaboratively detected prior to the data being used by $\destination$.~This detection relies on devices that forward packets to $\destination$, as these are also inspecting the incoming and outgoing network traffic.~

Let us consider $\noos$ different mobile operating systems, and $M_{\os}$ different malware available to the Attacker to infect devices that run a mobile operating system $\os \in [\noos]$.~Each device may run one or more anti-malware controls and for each $\os$ we assume $AM_{\os}$ anti-malware controls, which can mitigate malware that targets devices running $\os$.

Let us also assume $S$ devices and a device $s_i \in [S]$, which runs $\os$, might have available a combination of anti-malware controls given by the set $[AM^i_{\os}] \subseteq [AM_{\os}]$.~We use the characteristic function\footnote{this is a function defined on a set $X$ that indicates membership of an element in a subset $X'$ of $X$, having the value 1 for all elements of $X'$ and the value 0 for all elements of $X$ not in $X'$.} $\vec1_{[AM^i_{\os}]}: [AM_{\os}] \rightarrow \{0,1\}$ defined as follows:
\begin{align}
\vec1_{[AM_{\os}]}(a_z) := 
\begin{cases}
1, &~\text{if}~a_z \in [AM_{\os}],\\
0, &~\text{if}~a_z \notin [AM_{\os}].
\end{cases}
\end{align}
to express whether a control $a_z$ is installed in $s_i$ or not.

We express by $d(m_l,a_z) \in [0,1)$ the effectiveness of anti-malware control $a_z$ in mitigating $m_l \in [M_{\lambda}]$.~As a device can run one or more anti-malware controls, and each control $a_z$ has $1-d(m_l,a_z)$ probability of failing to detect $m_l$, the probability of $s_i$ failing to detect $m_l$ equals
\begin{align}\label{eq:psm}
p(s_i,m_l) := \prod_{a_z \in [AM_{\os}]: \vec1_{[AM_{\os}]}(a_z)=1} [1-d(m_l,a_z)]\,.
\end{align}
Note that each control detects different signs of malware and \emph{no interdependencies}, in terms of detection capabilities, are assumed in this paper.~To put it differently, we have assumed that an anti-malware control is the minimal piece of software that detects certain malicious signs.

We define as 
\begin{equation}
\vec{p}(s_i):=[p(s_i,m_l)]_{m_l \in [M_{\os}]} \in [0,1]^{M_{\os}}.
\end{equation}
the vector of \emph{failing detection probabilities}, which captures the \emph{effectiveness} of $s_i$ on detecting malware of the set $[M_{\os}]$.~One challenge here is to be able to derive these probabilities in practice.~This, for instance, can be done by undertaking thorough penetration tests (i.e.~ethical hacking) to assess the efficacy of each anti-malware control.~These tests can be performed offline for individual software components and then their combinations can be deployed on the devices.~As a result of this we can derive the probability of $m_l$ to infect $\destination$, when $\clhead$ uses the $j$th route for data delivery, as follows:
	\begin{equation}\label{eq_routedetprob}
	p(r_j,m_l) := \prod_{s_i \in \mc{S}_j} p(s_i,m_l).
	\end{equation}
	Thus, we define as $\vec{p}(r_j):=[p(r_j,m_l)]_{m_l \in [M]}$ the vector of probabilities $r_j$ to be infected by the different malware.~For more convenience, Table 1 summarizes the notation used in this paper.

\begin{table}[t]
\centering
\footnotesize
\caption{List of Symbols}
\label{tab_symbols}
\renewcommand*{\arraystretch}{1.1}
\begin{tabular}{| c | m{4.5cm} | c | m{4.5cm} |}
\hline \textbf{Symbol} & \textbf{Description} &  \textbf{Symbol} & \textbf{Description}\\ \hline \hline
\rowcolor{gray!15}
$[N]$ & Set of $N$ devices & $\clhead$ & Cluster-head \\
$\destination$ & Data requestor & $\data$ & Requested data\\

\rowcolor{gray!15}
$[R]$ & Set of routes from $\clhead$ to $\destination$ & $r_j$ & j-th route\\
$\mc{S}_j$ & Set of devices on $r_j$ & $\attacker$ & Attacker\\

\rowcolor{gray!15}
$[\noos]$ & Set of mobile operating systems & $\os$ & Operating system\\
$[M_{\os}]$ & Set of malware that can infect $\os$ & $[AM_{\os}]$ & Set of anti-malware controls for $\os$\\

\rowcolor{gray!15}
$[S]$ & Set of devices & $s_i$ & i-th device\\
$m_l$ & l-th malware & $d(m_l,a_z)$ & Effectiveness $a_z$ in mitigating $m_l$\\

\rowcolor{gray!15}
$p(s_i,m_l)$ & Probability of $s_i$ failing to detect $m_l$ & $\vec{p}(s_i)$ & Vector of ``failing-to-detect'' probabilities of $s_i$ for different malware\\
$p(r_j,m_l)$ & Probability of $\destination$ to be infected with malware $m_l$ when $\data$ is sent over $r_j$ & $\vec{p}(r_j)$ & Vector of infection probabilities for $r_j$ and all malware types\\

\rowcolor{gray!15}
$[M]$ & Set of malware & $\mixedstratD$ & Defender's mixed strategy\\

$\mixedstratA$ & Attacker's mixed strategy & $S(r_j,m_l)$ & Expected security damage on route $r_j$ when relaying $m_l$\\

\rowcolor{gray!15}
$c(s_i)$ & Malware detection cost on $s_i$ & $C(r_j)$ & Malware detection cost on $r_j$\\
$H(m_l)$ & Security loss inflicted by $m_l$ &
$L$ & path length\\

\rowcolor{gray!15}
$\mc{C}_j$ & Set of computational malware inspection costs $c(s_i)$ in $r_j$ &
$\mc{T}_j$ & Set of  malware inspection capabilities $\vec{p}(s_i)$ in $r_j$\\

\hline
\end{tabular}
\end{table}

\subsection{Game model}\label{sec:game_model}
Now that we have defined our system model by describing its components and their relationship,~in the rest of this section, we use game theory to investigate the optimal strategic routing decisions of $\clhead$,~the Defender,~and the Attacker who aims to infect one of the cluster devices with mobile malware.~The Attacker's objective is to succeed an attack against $\destination$ and the Defender must select a route to deliver the $\reply$ to $\destination$.

We define the \emph{Malware Detection Game} (MDG) between Defender and Attacker,~as an \emph{one-shot,~bimatrix} game of \emph{complete information} played for each requestor that seek some data.~The set of pure strategies of the Defender consists of all possible routes,~$r_j \in [R]$,~from $\clhead$ to $\destination$.~On the other hand,~the pure strategies of the Attacker are the different malware $m_l \in [M]$ that can be injected into the D2D network in the form of a \textsc{Reply}.~Thus,~in MDG a pure strategy profile is a pair of Defender and Attacker actions,~$(r_j,m_l) \in [R] \times [M]$ giving a pure strategy space of size $R \times M$.~For the rest of the paper,~the convention is adopted where the Defender is the row player and the Attacker is the column player.

Each player's preferences are specified by her \emph{payoff function},~and we define as $U_d: (r_j,m_l) \rightarrow \mathbb{R_{-}}$ and $U_a: (r_j,m_l) \rightarrow \mathbb{R_{+}}$ the payoff functions of the Defender and Attacker,~respectively,~when the pure strategy profile $(r_j,m_l)$ is played.~According to \cite{osborne1994course},~we define a \emph{preference relation} $\succsim$,~when $m_l $ is chosen by the Attacker,~by the condition $r_x \succsim r_y$,~if and only if $U_d(r_x,m_l) \geq U_d(r_y,m_l)$.~In general,~given the set $[R]$ of all available routes from $\clhead$ to $\destination$,~a rational Defender can choose a route (i.e.~pure strategy) $r^*$ that is \emph{feasible},~that is $r^* \in [R]$,~and \emph{optimal} in the sense that $r^* \succsim r,~\forall~r\in [R],~r\neq r^*$; alternatively she solves the problem $\max_{r\in [R]} U_d(r,~m_l)$,~for a message $m_l \in [M]$.~Likewise,~we can define the preference relation for the Attacker,~where $m_x \succsim m_y \iff U_a(r_j,m_x)\geq U_a(r_j,m_y)$,~for a route $r_j \in [R]$.

MDG can be seen as a \emph{game per session}, where the start of each session is signified by the transmission of a new \textsc{Reply} that the cluster-head will send to $\destination$; it is also realistic to assume that over a time period, there will be multiple sessions.~To derive optimal strategies for the Defender during the repetitions of MDGs, we deploy the notion of \emph{mixed strategies}.~Since players act independently,~we can enlarge their strategy spaces,~so as to allow them to base their decisions on the outcome of random events that create uncertainty to the opponent about individual strategic choices maximising their payoffs.~Hence,~both Defender and Attacker deploy randomised (i.e.~mixed) strategies.~The mixed strategy $\mixedstratD$ of the Defender is a probability distribution over the different routes (i.e.~pure strategies) from $\clhead$ to $\destination$,~where $\mixedstratD(r_j)$ is the probability of delivering a \textsc{Reply} via $r_j$ under mixed strategy $\mixedstratD$.~We refer to a mixed strategy of the Defender as a \emph{Randomised Delivery Plan} (RDP).~For the finite nonempty set $[R]$,~let $\Pi_{[R]}$ represent the set of all probability distributions over it,~i.e.
\begin{eqnarray}
\Pi_{[R]} := \{\mixedstratD \in \mathbb{R}^{+R} |~\sum_{r_j \in [R]} \mixedstratD(r_j)=1 \}.	
\label{eq:set_probs_R}
\end{eqnarray}
Therefore a member of $\Pi_{[R]}$ is a mixed strategy of the Defender.

Likewise,~the Attacker's mixed strategy is a probability distribution over the different available malware.~This is denoted by $\mixedstratA$,~where $\mixedstratA(m_l)$ is the probability of choosing $m_l$ under mixed strategy $\mixedstratA$.~We refer to a mixed strategy of the Attacker as the \emph{Malware Plan} (MP).~Similarly with (\ref{eq:set_probs_R}),~we express by $\Pi_{[M]}$ the set of all probability distributions over the set of all Attacker's pure strategies given by $[M]$.~Thus,~a member of $\Pi_{[M]}$ is as a mixed strategy of the Attacker.~From the above,~the set of mixed strategy profiles of MDG is the Cartesian product of the individual mixed strategy sets,~$\Pi_{[R]} \times \Pi_{[M]}$.	

\begin{definition}
The support of RDP $\mixedstratD$ is the set of routes $\{r_j|\mixedstratD(r_j)>0\}$,~and it is denoted by $supp(\mixedstratD)$.
\end{definition}

\begin{definition}
The support of MP $\mixedstratA$ is the set of malware $\{m_l|\mixedstratA(m_l)>0\}$,~and it is denoted by $supp(\mixedstratA)$.
\end{definition}

The above definitions state that the subset of routes (resp.~malware) that are assigned positive probability by the mixed strategy $\mixedstratD$ (resp.~$\mixedstratA$) is called the \emph{support} of $\mixedstratD$ (resp.~$\mixedstratA$).~Note that a pure strategy is a special case of a mixed strategy,~in which the support is a single action.

Now that we have defined the mixed strategies of the players,~we can define MDG as the finite strategic game $\Gamma=\1 (\mathrm{Defender},~\mathrm{Attacker}),~\Pi_{[R]} \times \Pi_{[M]},~(U_d,U_a)\2$.
For a given mixed strategy profile $(\mixedstratD,\mixedstratA) \in \Pi_{[R]} \times \Pi_{[M]}$,~ we denote by $U_d(\mixedstratD,\mixedstratA)$,~and $U_a(\mixedstratD,\mixedstratA)$ the expected payoff values of the Defender and Attacker,~where the expectation is due to the independent randomisations according to mixed strategies $\mixedstratD$,~and $\mixedstratA$.

Formally 
\begin{equation}\label{eq:util_def}
%\begin{aligned}
%U_d(\mixedstratD,\mixedstratA) &:= \mathbb{E}_{\sim \mixedstratD,\mixedstratA}\,U_d(r_j,m_l)\\
%&= \sum_{r_j\in [R]} \sum_{m_l\in [M]} U_d(r_j,m_l)  \,\mixedstratD(r_j) \,\mixedstratA(m_l).~
%\end{aligned}	
\begin{aligned}
U_d(\mixedstratD,\mixedstratA) := \sum_{r_j\in [R]} \sum_{m_l\in [M]} U_d(r_j,m_l)  \,\mixedstratD(r_j) \,\mixedstratA(m_l).~
\end{aligned}	
\end{equation}

and similarly 
\begin{equation}\label{eq:util_att}
%\begin{aligned}
%U_a(\mixedstratD,\mixedstratA) &:= \mathbb{E}_{\sim \mixedstratD,\mixedstratA}\,U_d(r_j,m_l)  \\
%&=  \sum_{r_j\in [R]} \sum_{m_l\in [M]} U_a(r_j,m_l) \,\mixedstratD(r_j) \,\mixedstratA(m_l).~\
%\end{aligned}	
\begin{aligned}
U_a(\mixedstratD,\mixedstratA) :=  \sum_{r_j\in [R]} \sum_{m_l\in [M]} U_a(r_j,m_l) \,\mixedstratD(r_j) \,\mixedstratA(m_l).~\
\end{aligned}
\end{equation}

By using the preference relation we can say that,~for an Attacker's mixed strategy $\mixedstratA$,~the Defender prefers to follow the RDP $\mixedstratD$ as opposed to $\mixedstratD'$ (i.e.~$\mixedstratD \succsim \mixedstratD'$),~if and only if $U_d\mixed \geq U_d(\mixedstratD',\mixedstratA)$.

\begin{definition}
The Defender's $($resp.~Attacker's) best response to the mixed strategy $\mixedstratA$ $($resp.~$\mixedstratD$) of the Attacker (resp.~Defender) is a RDP $\mixedstratD^{\BR} \in \Pi_{[R]}$ $($resp.~$\mixedstratA^{\BR} \in \Pi_{[M]})$ such that $U_d(\mixedstratD^{\BR},\mixedstratA) \geq U_d(\mixedstratD,\mixedstratA),~\forall~\mixedstratD\in \Pi_{[R]}$ $($resp.~$U_a(\mixedstratD,\mixedstratA^{\BR}) \geq U_d(\mixedstratD,\mixedstratA),~\forall~\mixedstratA\in \Pi_{[M]})$.
\end{definition}

It is noteworthy to mention that the game theoretic solutions that we will propose, in the next section, involve \emph{randomisation}.~For instance, in a mixed equilibrium,~each player's randomisation leaves the other \emph{indifferent} across her randomisation support.~These choices can be deliberately randomised or be taken by software agents that run in mobile devices (i.e.~cluster-heads or adversaries).~However these are not the only equilibria interpretations.~For instance,~the probabilities over the pure actions (i.e.~route or malware pure selections) can represent (i) time averages of an ``adaptive'' player,~(ii) a vector of fractions of a ``population'',~where each player type adopts pure strategies and,~(iii) a ``belief'' vector that each player has about the other regarding their behaviour.

\section{Game solutions}\label{solutions}

Now that we have defined MDG along with its components, in this section we concentrate in deriving optimal strategies for the Defender.~First,~we investigate the problem of determining best RDPs and MPs (i.e.~mixed strategies),~for the Defender and the Attacker respectively,~when both parties are rational decision-makers and they play simultaneously.~Note that a \emph{game solution} is a prediction of how rational players may take decisions.

As we have not explicitly defined the \emph{strategic type} of Attacker, we consider different types of solutions based on various Attacker behaviours.~This analysis will allow us to draw robust conclusions regarding the \emph{overall optimal} Defender strategy, which will minimise expected damages \emph{regardless of the Attacker type}.

\subsection{Nash mixed strategies}
The most commonly used solution concept in game theory is that of \emph{Nash Equilibrium} (NE).~This concept captures a steady state of the play of the MDG in which Defender and Attacker hold the correct expectation about the other players' behaviour and they act rationally.~In other words,~an NE dictates optimal responses to each other's actions,~keeping the others' strategies fixed,~i.e.~strategy profiles that are resistant against unilateral deviations of players.

\begin{definition}
In any Malware Detection Game (MDG),~a mixed strategy profile $(\mixedNE)$ of $\Gamma$ is a mixed NE if and only if 
\begin{enumerate}
	\item $\mixedstratD^{\NE} \succsim \mixedstratD,~\forall \mixedstratD \in \Pi_{[R]}$,~when the Attacker chooses $\mixedstratA^{\NE}$,~i.e.
	\begin{eqnarray}
		U_d(\mixedNE)\geq_{\forall \mixedstratD\in \Pi_{[R]}} U_d(\mixedstratD,\mixedstratA^{\NE});
	\end{eqnarray} 
	\item $\mixedstratA^{\NE}\succsim \mixedstratA,~\forall \mixedstratA \in \Pi_{[M]}$,~when the Defender chooses $\mixedstratD^{\NE}$,~i.e.~
	\begin{eqnarray}
		U_a(\mixedNE)\geq_{\forall \mixedstratA\in \Pi_{[M]}} U_a(\mixedstratD^{\NE},\mixedstratA).
	\end{eqnarray} 
\end{enumerate}
\end{definition}

\begin{definition} 
The Nash Delivery Plan (NDP),~denoted by $\mixedstratD^{\NE}$,~is the probability distribution over the different routes,~as determined by the NE of the MDG.
\end{definition} 

For instance,~a NDP $(0.7,0.3)$ dictates that 70\% of the $\reply$s will be sent over $r_1$,~and 30\% over $r_2$.~Note that this distribution does not determine which $\reply$ is sent over which route,~as this decision is probabilistic.

\subsection{Maximin strategies}

We say that the Defender maximinimizes if she chooses an RDP that is best for her on the assumption that whatever she does,~the Attacker will choose an MP to cause the highest possible damage to her.

\begin{definition}
A Randomised Delivery Plan $\maximinD \in \Pi_{[R]}$ is a maximin strategy of the Defender,~if and only if 
\begin{eqnarray}
\min_{\mixedstratA \in \Pi_{[M]}} U_d(\maximinD,\mixedstratA) \geq \min_{\mixedstratA \in \Pi_{[M]}} U_d(\mixedstratD,\mixedstratA),\forall \mixedstratD \in \Pi_{[R]}.
\end{eqnarray}
\end{definition}

A maximinimiser for the Defender is an RDP that maximises the payoff that the Defender can \emph{guarantee}.~In other words,~$\mixedstratD^{\dagger}$ guarantees (i.e.~``secures'') the Defender at least her maximin payoff regardless of $\mixedstratA$,~as $\mixedstratD^{\dagger}$ solves the problem $\max_{\mixedstratD} \min_{\mixedstratA} U_d(\mixedstratD,\mixedstratA)$.~That is why $\mixedstratD^{\dagger}$ is also called \emph{security strategy}.

\begin{definition}
A Malware Plan $\maximinA \in \Pi_{[M]}$ is a maximin strategy of the Attacker,~if and only if 
\begin{eqnarray}
\min_{\mixedstratD \in \Pi_{[R]}} U_a(\mixedstratD,\maximinA) \geq \min_{\mixedstratD \in \Pi_{[R]}} U_a(\mixedstratD,\mixedstratA),\forall \mixedstratA \in \Pi_{[M]}.
\end{eqnarray}
\end{definition}

\subsection{Stackelberg mixed strategies}
A \emph{two-player Stackelberg game} involves one player (leader) to commit to a strategy before the other player (follower) moves.~In a Stackelberg model the \emph{commitment of the leader is absolute},~that is the leader cannot back-track on her commitment.~On the other hand,~the follower sees the strategy that the leader committed to,~before she chooses a strategy.

In an Stackelberg MDG,~the Attacker \emph{conducts surveillance} before she attacks and therefore she is aware of the Defender's RDP.~For completeness,~we consider that this best-response is expressed also in mixed strategies.

In general,~Stackelberg and Nash games \emph{do not have the same equilibria}.~For instance,~let us consider the normal-form MDG in Table \ref{tab_example},~where the Defender has only two routes ($r,r'$) available and the Attacker can choose between two malware types ($m,m'$).~We see that if this is a Nash game,~$r$ is a strictly dominant strategy for the Defender,~as it gives her a higher payoff value than $r'$.~As we have assumed that this is a complete information game,~the Attacker knows that $r$ is preferable for the Defender and she chooses $m$,~which rewards her with 1 as opposed to $m'$,~which gives payoff value 0.~Therefore the NE of the game (in pure strategies) is $(r,m)$.

If we now consider this game as Stackelberg,~the Defender (leader) can commit to a strategy before the Attacker (follower) chooses her strategy.~If the Defender commits to $r$ then the Attacker will play $m$,~but if the Defender commits to $r'$ then the Attacker will choose $m'$.~The second pure strategy profile,~i.e.~$(r',m')$ gives higher payoff to the Defender (-2 as opposed to $(r,m)$,~which gives -3) and therefore the Defender is better-off in the Stackelberg game compared to the Nash game,~where her payoff equals -3 $<$ -2.

\begin{table}
\centering
\caption{A toy game example}
\label{tab_example}
\renewcommand*{\arraystretch}{1}
\begin{tabular}{| c | c | c |}
\hline & $m$ & $m'$\\
\hline $r$ & -3,1 & -1,0\\
\hline $r'$ & -4,0 & -2,1\\
\hline
\end{tabular}
\end{table}

\begin{definition}
A Reply Delivery Plan (RDP) is optimal if it maximises the Defender's payoff given that the Attacker will always play a best-response strategy with tie-breaking in favour of the Defender.
\end{definition}

\begin{definition}
 A Malware Plan is a best response if it maximises the Attacker's payoff,~taking the Defender's Reply Delivery Plan as given.
\end{definition}

A commonly used notion of a solution in Stackelberg games is the Strong Stackelberg Equilibrium (SSE),~defined in MDG as follows.

\begin{definition} 
At the Strong Stackelberg Equilibrium of the MDG:
\begin{enumerate}
\item for any $\mixedstratD \in \Delta_{[R]}$,~the Attacker plays a best-response $\mixedstratA^{\BR}(\mixedstratD) \in \Delta_{[M]}$ that is,
\begin{equation}\label{eq:sse_cond1}
\medmuskip=0mu
\thinmuskip=0mu
\thickmuskip=0mu
\begin{aligned}
U_a(\mixedstratD,\mixedstratA^{\BR}(\mixedstratD)) \geq
  U_a(\mixedstratD,\mixedstratA(\mixedstratD)),~\forall \,\mixedstratA(\mixedstratD) \neq \mixedstratA^{\BR}(\mixedstratD);
\end{aligned}	
\end{equation}

\item for any $\mixedstratD \in \Delta_{[R]}$,~the Attacker breaks ties in favour of the Defender,~that is,~when there are multiple best responses to $\mixedstratD$,~the Attacker plays the best response $\mixedstratA^{\SSE}(\mixedstratD) \in \Delta_{[M]}$ that maximises the Defender's payoff:
\begin{equation}\label{eq:sse_cond2}
\medmuskip=0mu
\thinmuskip=0mu
\thickmuskip=0mu
\begin{aligned}
U_d(\mixedstratD,\mixedstratA^{\SSE}(\mixedstratD)) \geq
  U_d(\mixedstratD,\mixedstratA^{\BR}(\mixedstratD)),\\ 
  \forall \mixedstratA^{\BR}~\text{best response to}~\mixedstratD;
\end{aligned}	
\end{equation}
\item the Defender plays a best-response $\mixedstratD^{\SSE}\in \Delta_{[R]}$,~which maximises her payoff given that the Attacker's strategies are given by the first two conditions (i.e.~the Attacker always plays best response with tie-breaking in favour of the Defender \cite{tambe2011},\cite{kiekintveld2009computing}):
\begin{equation}\label{eq:sse_cond3}
\medmuskip=0mu
\thinmuskip=0mu
\thickmuskip=0mu
\begin{aligned}
U_d(\mixedstratD^{\SSE},\mixedstratA^{\SSE}(\mixedstratD^{\SSE})) \geq U_d(\mixedstratD,~ \mixedstratA^{\SSE}(\mixedstratD)),~\forall \,~\mixedstratD \neq \mixedstratD^{\SSE}.
\end{aligned}	
\end{equation}
\end{enumerate}
\end{definition}

\section{Optimality analysis}\label{analysis}

For the purpose of analysis, we consider \emph{complete information} Nash MDGs, according to which both players know the game matrix, which contains the utilities of both players for each pure strategy profile.~The utility function of the Defender is determined by the probability of failing to detect a route and the overall performance cost, which is imposed on the devices of the $j$-th route when undertaking malware detection.~We denote by $c(s_i)$ the performance cost imposed on each $s_i \in \mc{S}_j$ and therefore the overall performance cost over a route $r_j$ equals $\sum_{s_i\in \mc{S}_j} c(s_i)$.

We consider two different MDGs; (i) a \emph{zero sum} MDG, where the Attacker's utility is the opposite of the Defender's utility and (ii) a \emph{non-zero sum} MDG, where the Attacker's utility is a strictly positive scaling of the Defender's utility.

The rationale behind the zero sum game is that when there are clear winners (e.g.~the Attacker) and losers (e.g.~the Defender), and the Defender is uncertain about the Attacker type, she considers the \emph{worst case scenario}, which can be formulated by a zero sum game where the Attacker can cause her \emph{maximum damage}.~While in most security situations the interests of the players are neither in strong conflict nor in complete identity, the zero sum game provides important insights into the notion of ``optimal play'', which is closely related to the \emph{minimax theorem} \cite{minimax}.

In the zero sum MDG, $\zs=\1 \{d,a\}, [R] \times [M], \{U_d,-U_d\}\2$ (for clarity $d$ has been used for the Defender and $a$ for the Attacker), the Attacker's gain is equal to the Defender's security loss, and vice versa.~We define the utility of the Defender in $\zs$ as 
\begin{eqnarray}\label{eq:utility_defender_in_zs}
\small
U_d^{\zs}(r_j,m_l) \dfn - w_{H}\,p(r_j,m_l)\,H(m_l) - w_{C}\sum_{s_i\in \mc{S}_j} c(s_i).
\end{eqnarray}
The first term of (\ref{eq:utility_defender_in_zs}) is the expected security loss of the Defender inflicted by the Attacker when attempting to infect $\destination$ with $m_l$, while the second term expresses the aggregated message inspection cost imposed on all devices of $r_j$, irrespective of the attacking strategy.~Note that $w_H, w_C \in [0,1]$ are importance weights, which can facilitate the Defender with setting her  preferences in terms of security loss, and computational detection cost, accordingly.

By setting $S(r_j,m_l)=w_{H}\,p(r_j,m_l)\,H(m_l)$, and $C(r_j)=w_{C}\sum_{s_i\in \mc{S}_j} c(s_i)$, we have that
\begin{equation} 
\label{eq:utility_defender_in_zs_short}
U_d^{\zs}(r_j,m_l)\dfn - S(r_j,m_l) - C(r_j).~
\end{equation}

For a mixed profile $(\mixedstratD,\mixedstratA)$, the utility of the Defender equals
\begin{equation}\label{eq:mixed_payoff_def}
\small
\begin{aligned}
&U_d^{\Gamma_0}(\mixedstratD,\mixedstratA) \overset{(\ref{eq:util_def})}{=}
\sum_{r_j\in [R]} \sum_{m_l\in [M]} U_d^{\Gamma_0}(r_j,m_l)  \mixedstratD(r_j) \, \mixedstratA(m_l) \\
&\overset{(\ref{eq:utility_defender_in_zs_short})}{=}
\sum_{r_j\in [R]} \sum_{m_l\in [M]} [- S(r_j,m_l) - C(r_j)]\,\mixedstratD(r_j) \,\mixedstratA(m_l) \\
&=- \sum_{r_j\in [R]} \sum_{m_l\in [M]} S(r_j,m_l)\,\mixedstratD(r_j) \,\mixedstratA(m_l)\\
&- \sum_{r_j\in [R]} C(r_j)\,\mixedstratD(r_j).
\end{aligned}
%\vspace{-0.2cm}
\end{equation}

As $\zs$ is a zero sum game, the Attacker's utility is given by $U_a^{\zs}(\mixedstratD, \mixedstratA) = -\,U_d^{\zs}(\mixedstratD, \mixedstratA)$.~Since the Defender's equilibrium strategies maximise her  utility, given that the Attacker maximises her  own utility, we will refer to them as \emph{optimal strategies}.

As $\zs$ is a two-person zero sum game with finite number of actions for both players, according to Nash \cite{Nash:NAS:1950}, it admits at least a NE in mixed strategies, and saddle-points correspond to Nash equilibria as discussed in \cite{Alpcan:Book:2012} (p.\,42).~The following result from \cite{Basar:Book:1995}, establishes the existence of a saddle (equilibrium) solution in the games we examine and summarizes their properties.

\begin{definition}[Saddle point of the MDG] 

The $\zs$ Malware Detection Game (MDG) admits a saddle point in mixed strategies, $(\mixedstratD^{\NE}_{\zs},\mixedstratA^{\NE}_{\zs})$, with the property that
\begin{itemize} 
\item $\mixedstratD^{\NE}_{\zs}=\arg \max_{\mixedstratD \in \Delta_{[R]}} \min_{\mixedstratA \in \Delta_{[M]}} U_d^{\Gamma_0}(\mixedstratD,\mixedstratA),\; \forall \mixedstratA$, and
\item $\mixedstratA^{\NE}_{\zs}=\arg \max_{\mixedstratA \in \Delta_{[M]}} \min_{\mixedstratD \in \Delta_{[R]}} U_a^{\Gamma_0}(\mixedstratD,\mixedstratA),\,\forall \mixedstratD$.
\end{itemize}
Then, due to the zero sum nature of the game, the minimax theorem \cite{minimax} holds, i.e.~
$\max_{\mixedstratD \in \Delta_{[R]}} \min_{\mixedstratA \in \Delta_{[M]}} U_d^{\Gamma_0}(\mixedstratD,\mixedstratA)= \min_{\mixedstratA \in \Delta_{[M]}} \max_{\mixedstratD \in \Delta_{[R]}} U_d^{\Gamma_0}(\mixedstratD,\mixedstratA)$.

The pair of saddle point strategies $(\mixedstratD^{\NE}_{\zs},\mixedstratA^{\NE}_{\zs})$ are at the same time security strategies for the players, i.e.~they ensure a minimum performance regardless of the actions of the other.~Furthermore, if the game admits multiple saddle points (and strategies), they have the ordered interchangeability property, i.e.~the player achieves the same performance level independent from the other player's choice of saddle point strategy.
\end{definition}

The minimax theorem \cite{minimax} states that for zero sum games, NE and minimax solutions coincide.~Therefore, $\mixedstratD^{\NE}_{\zs} = {\tt \arg\min}_{\mixedstratD \in \Delta_{[R]}} \max_{\mixedstratA \in \Delta_{[M]}} U_a^{\Gamma_0} (\mixedstratD,\mixedstratA)$.~This means that regardless of the strategy the Attacker chooses, the Nash Delivery Plan (NDP) is the Defender's security strategy that guarantees a minimum performance.

We can convert $\zs$ into a Linear Programming (LP) problem and make use of some of the powerful algorithms available for LP to derive the equilibrium.~For a given mixed strategy $\mixedstratD$ of the Defender, we assume that the Attacker can cause maximum damage to $\destination$ by injecting a message $\widehat{m}$ into the cluster network.

Formally, the Defender seeks to solve the following LP: 
\begin{equation}
\small
\begin{aligned}
&\max_{\mixedstratD \in \Delta_{[R]}} \min_{\mixedstratA \in \Delta_{[M]}} U_d^{\Gamma_0}(\mixedstratD, \widehat{m}\,) \\ 
\text{subject} & \text{~to}
\begin{cases} 
\label{eq:lp}
U_{d}^{\Gamma_0}(\mixedstratD,m_1) - \min_{\mixedstratA \in \Delta_{[M]}}U_d^{\Gamma_0}(\mixedstratD, \widehat{m})e \geq 0\\ \hspace{2cm} \vdots \\ U_{d}^{\Gamma_0}(\mixedstratD,m_M) - \min_{\mixedstratA \in \Delta_{[M]}}U_d^{\Gamma_0}(\mixedstratD, \widehat{m})e \geq 0\ \\ \mixedstratD e = 1 \\ \mixedstratD \geq 0.~
\end{cases} 
\end{aligned}
\end{equation}

In this problem, $e$ is a vector of ones of size $M$.

\begin{lemma}
\label{lemma:mixed_ne}
A mixed strategy profile $(\mixedstratD^{\NE},\mixedstratA^{\NE}) \in \Pi_{[R]} \times \Pi_{[M]}$ in $\Gamma_0$, is a mixed strategy NE if and only if
\begin{enumerate}
	\item every route $r_j \in supp(\mixedstratD^{\NE})$ selection is a best response to $\mixedstratA^{\NE}$ and,
	\item every malware $m_l \in supp(\mixedstratA^{\NE})$ selection is a best response to $\mixedstratD^{\NE}$.
\end{enumerate}
\end{lemma}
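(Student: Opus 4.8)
The plan is to reduce everything to one elementary observation: each player's expected payoff, viewed as a function of her \emph{own} mixed strategy with the opponent's strategy fixed, is affine, hence is optimised over the simplex by concentrating weight on the best pure responses. First I would record the auxiliary fact that, for a fixed $\mixedstratA^{\NE} \in \Pi_{[M]}$, a delivery plan $\mixedstratD \in \Pi_{[R]}$ is a best response to $\mixedstratA^{\NE}$ if and only if $supp(\mixedstratD) \subseteq \arg\max_{r_k \in [R]} U_d(r_k,\mixedstratA^{\NE})$. This follows because, by (\ref{eq:util_def}), $U_d(\mixedstratD,\mixedstratA^{\NE}) = \sum_{r_j \in [R]} \mixedstratD(r_j)\,U_d(r_j,\mixedstratA^{\NE})$ is a convex combination of the pure-strategy payoffs $U_d(r_j,\mixedstratA^{\NE})$, so it never exceeds $u_d^\star := \max_{r_k \in [R]} U_d(r_k,\mixedstratA^{\NE})$ and attains $u_d^\star$ exactly when no positive weight sits on a route with $U_d(r_j,\mixedstratA^{\NE}) < u_d^\star$. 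The symmetric statement for the Attacker — $\mixedstratA$ is a best response to $\mixedstratD^{\NE}$ iff $supp(\mixedstratA) \subseteq \arg\max_{m_l \in [M]} U_a(\mixedstratD^{\NE},m_l)$ — would come out of (\ref{eq:util_att}) by the same argument with $U_a$, $\Pi_{[M]}$, $[M]$ in place of $U_d$, $\Pi_{[R]}$, $[R]$.

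\textbf{($\Rightarrow$).} Given that $(\mixedstratD^{\NE},\mixedstratA^{\NE})$ is a mixed NE of $\zs$, I would invoke the first equilibrium condition — $\mixedstratD^{\NE}$ is a best response to $\mixedstratA^{\NE}$ — and feed it into the auxiliary fact to conclude $supp(\mixedstratD^{\NE}) \subseteq \arg\max_{r_k \in [R]} U_d(r_k,\mixedstratA^{\NE})$, i.e.\ every $r_j \in supp(\mixedstratD^{\NE})$ is a best response to $\mixedstratA^{\NE}$, which is clause (1). Clause (2) follows the same way from the second equilibrium condition together with the Attacker version of the auxiliary fact.

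\textbf{($\Leftarrow$).} For the converse, clause (1) forces every $r_j \in supp(\mixedstratD^{\NE})$ to attain $u_d^\star$, so $U_d(\mixedstratD^{\NE},\mixedstratA^{\NE}) = \sum_{r_j \in [R]} \mixedstratD^{\NE}(r_j)\,u_d^\star = u_d^\star$, whereas $U_d(\mixedstratD,\mixedstratA^{\NE}) \le u_d^\star = U_d(\mixedstratD^{\NE},\mixedstratA^{\NE})$ for every $\mixedstratD \in \Pi_{[R]}$ by the convex-combination bound; this is precisely the first NE inequality. The identical computation with $U_a$ and clause (2) yields the second NE inequality, so $(\mixedstratD^{\NE},\mixedstratA^{\NE})$ is a mixed NE.

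\textbf{Main obstacle.} I do not expect a genuine obstacle here: the whole content is the affine-optimisation-over-a-simplex fact, namely that a convex combination of reals is bounded above by their maximum, with equality iff the support avoids every non-maximal term. The two points that need a little care are (a) phrasing ``best response'' against the \emph{full} mixed-strategy sets $\Pi_{[R]}$ and $\Pi_{[M]}$ rather than only against pure strategies — the affineness argument handles this uniformly — and (b) keeping the Defender and Attacker halves honestly symmetric; note the argument never invokes any relationship between $U_d$ and $U_a$, so although stated for $\zs$ it applies verbatim to the non-zero-sum game $\nzs$ as well.
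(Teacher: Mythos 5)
Your proof is correct, and its engine is the same one the paper uses: the expected payoff is affine in a player's own mixed strategy, so optimising over the simplex is a matter of where the probability mass sits. The difference is in how much of the ``if and only if'' each argument actually delivers. The paper proves only the forward direction, and does so by contradiction: if some $r'_j \in supp(\mixedstratD^{\NE})$ were not a best response to $\mixedstratA^{\NE}$, linearity lets the Defender transfer the mass $\mixedstratD^{\NE}(r'_j)$ onto a best-response route and strictly improve, contradicting the equilibrium property; the Attacker side is declared symmetric, and the converse direction is never spelled out. You instead package the linearity into the standard support characterisation (a mixed strategy is a best response iff its support lies in the pure-strategy $\arg\max$), obtain the forward direction from it, and then give the converse explicitly via the convex-combination bound $U_d(\mixedstratD,\mixedstratA^{\NE}) \le \max_{r_k} U_d(r_k,\mixedstratA^{\NE}) = U_d(\mixedstratD^{\NE},\mixedstratA^{\NE})$, which is what the ``if'' half of the lemma actually requires. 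So your write-up is the more complete of the two. Your closing observation that nothing in the argument uses any relation between $U_d$ and $U_a$, so the lemma holds verbatim in $\nzs$ as well as $\zs$, is also accurate and goes slightly beyond what the paper states.
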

\begin{proof}
First, notice that $U_d$, as defined in (\ref{eq:utility_defender_in_zs}), is a linear function in $\mixedstratD(r_j)$ that is, for any two RDPs $\mixedstratD_1$ and $\mixedstratD_2$ and any number $\theta \in [0,1]$ we have $U_d(\theta\,\mixedstratD_1 + (1-\theta)\,\mixedstratA)=\theta\,U_d(\mixedstratD_1)+(1-\theta)\,U_d(\mixedstratD_2)$.~Then, for the sake of contradiction, assume there exists a route $r'_j \in supp(\mixedstratD^{\NE})$ selection that is not a best response to $\mixedstratA^{\NE}$.~Due to the linearity of $U_d$ in $\mixedstratD^{\NE}(r_j)$, the Defender can increase her payoff by transferring probability from $\mixedstratD(r'_j)$ to a route selection that is a best response to $\mixedstratA^{\NE}$, creating a new mixed strategy $\mixedstratD^* \succsim \mixedstratD^{\NE}$.~However, this contradicts the assumption that $\mixedstratD^{\NE}$ is the strategy of the Defender at the NE, as the Defender prefers to deviate from $\mixedstratD^{\NE}$ to gain a higher payoff, by playing $\mixedstratD^*$.~The second part of the lemma can be proven in the same way.~
\end{proof}

Let us now assume a non-zero sum MDG, denoted by $\nzs$, with the same strategy spaces with $\zs$, in which the Defender's utility is the same as in $\zs$, i.e.~$U_d^{\nzs}(\mixedstratD,\mixedstratA)=U_d^{\zs}(\mixedstratD,\mixedstratA)=-S(r_j,m_l)-C(r_j)$.~On the other hand, the Attacker's utility is (strictly positive) scaling of the security loss $S(r_j,m_l)$ of the Defender upon a successful attack.~This is to say that the performance cost of the Defender is only important to her  as the Attacker is only after compromising $\destination$.~Therefore, given a pure strategy profile $(r_j,m_l)$, the utility of the Attacker, in $\nzs$, is defined as: 
\begin{eqnarray}
\label{eq:util_att_zero_sum}
U_a^{\nzs}(r_j,m_l) \dfn \mathrm{\Xi}\,S(r_j,m_l),~\text{for}~\mathrm{\Xi}>0.	
\end{eqnarray}

For a mixed profile $(\mixedstratD,\mixedstratA)$ the utility of the Attacker is given by 
\begin{equation}\label{eq:mixed_payoff_att}
\begin{aligned}
U_a^{\nzs}(\mixedstratD,\mixedstratA) &\overset{(\ref{eq:util_att})}{=}
\sum_{r_j\in [R]} \sum_{m_l\in [M]} U_a^{\nzs}(r_j,m_l) \, \mixedstratD(r_j) \, \mixedstratA(m_l) \\
&\overset{(\ref{eq:util_att_zero_sum})}{=}
\sum_{r_j\in [R]} \sum_{m_l\in [M]} \Xi \, S(r_j,m_l)\,\mixedstratD(r_j) \, \mixedstratA(m_l).
\end{aligned}
\end{equation}

Hence, due to $U_d^{\nzs}(\mixedstratD,\mixedstratA)=U_d^{\zs}(\mixedstratD,\mixedstratA)$, from (\ref{eq:mixed_payoff_def}) and (\ref{eq:mixed_payoff_att}) we have that
\begin{equation}\label{eq:util_att_def}
\small
\begin{aligned}
U_d^{\nzs}(\mixedstratD,\mixedstratA) &=- \frac{1}{\Xi} U_a^{\nzs}(\mixedstratD,\mixedstratA) - \sum_{r_j\in [R]} C(r_j)\,\mixedstratD(r_j) \\
&= - \frac{1}{\Xi} U_a^{\nzs}(\mixedstratD,\mixedstratA) - k(\mixedstratD),
\end{aligned}
\end{equation}
where $\frac{1}{\Xi}>0$, and $k(\mixedstratD)$ is an expression that does not depend on $\mixedstratA$.~That is, the best response of the Defender to any given malware plan, also yields the utility for the Defender at the worst case scenario.~

\begin{lemma}
\label{lemma:nes}
NE strategies of the Defender in $\nzs$ are equivalent of the NE strategies of the Defender in $\zs$.~Formally, $\setne_{\nzs}=\setne_{\zs}$.
\end{lemma}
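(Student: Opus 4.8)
The plan is to show that the games $\zs$ and $\nzs$ induce \emph{exactly the same best-response correspondences} for both players, from which equality of the Nash equilibrium sets --- and hence of the Defender's equilibrium strategy sets --- is immediate. The first, easy half is that the Defender's payoff is unchanged by construction: $U_d^{\nzs}(\mixedstratD,\mixedstratA)=U_d^{\zs}(\mixedstratD,\mixedstratA)$ for every $(\mixedstratD,\mixedstratA)\in\Pi_{[R]}\times\Pi_{[M]}$. Therefore, for any fixed Attacker strategy $\mixedstratA$, the set of Defender best responses $\{\mixedstratD\in\Pi_{[R]}: U_d(\mixedstratD,\mixedstratA)\geq U_d(\mixedstratD',\mixedstratA)\;\forall\,\mixedstratD'\in\Pi_{[R]}\}$ is literally the same object in $\zs$ and in $\nzs$.

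Next I would compare the Attacker's payoffs. From (\ref{eq:mixed_payoff_def}) and (\ref{eq:mixed_payoff_att}), $U_a^{\zs}(\mixedstratD,\mixedstratA)=-U_d^{\zs}(\mixedstratD,\mixedstratA)=\sum_{r_j\in[R]}\sum_{m_l\in[M]} S(r_j,m_l)\mixedstratD(r_j)\mixedstratA(m_l)+\sum_{r_j\in[R]}C(r_j)\mixedstratD(r_j)$, whereas $U_a^{\nzs}(\mixedstratD,\mixedstratA)=\Xi\sum_{r_j\in[R]}\sum_{m_l\in[M]}S(r_j,m_l)\mixedstratD(r_j)\mixedstratA(m_l)$. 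Hence, for a \emph{fixed} $\mixedstratD$, $U_a^{\zs}(\mixedstratD,\cdot)=\tfrac{1}{\Xi}U_a^{\nzs}(\mixedstratD,\cdot)+k(\mixedstratD)$ with $k(\mixedstratD)=\sum_{r_j\in[R]}C(r_j)\mixedstratD(r_j)$ independent of $\mixedstratA$ --- this is exactly the relation already recorded in (\ref{eq:util_att_def}). Since $\Xi>0$, maximising $U_a^{\zs}(\mixedstratD,\cdot)$ over $\Pi_{[M]}$ and maximising $U_a^{\nzs}(\mixedstratD,\cdot)$ over $\Pi_{[M]}$ have the same set of maximisers, so the Attacker's best-response correspondence $\mixedstratD\mapsto\mixedstratA^{\BR}(\mixedstratD)$ coincides in the two games; equivalently, invoking Lemma \ref{lemma:mixed_ne}, a malware $m_l$ is a best response to $\mixedstratD$ in $\nzs$ if and only if it is one in $\zs$.

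Finally I would combine the two observations: a profile $(\mixedstratD^{\NE},\mixedstratA^{\NE})$ is a NE of a given game iff $\mixedstratD^{\NE}$ is a best response to $\mixedstratA^{\NE}$ and $\mixedstratA^{\NE}$ is a best response to $\mixedstratD^{\NE}$; since both best-response correspondences agree on $\zs$ and $\nzs$, the sets of NE profiles agree, and in particular so do their projections onto the Defender's coordinate, i.e.\ $\setne_{\nzs}=\setne_{\zs}$. Non-emptiness is not a concern: $\zs$ is a finite zero-sum game, hence admits a saddle point, which by the above is also a NE of $\nzs$. I expect the only delicate step to be the comparison of Attacker payoffs: one must make explicit that the term $k(\mixedstratD)$, present in $U_a^{\zs}$ but absent in $U_a^{\nzs}$, is harmless precisely because it is constant in $\mixedstratA$ once $\mixedstratD$ is fixed, and that it is the \emph{strict} positivity of the scaling factor $\Xi$ that preserves the $\arg\max$; the rest is bookkeeping.
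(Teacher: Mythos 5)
Your proposal is correct and follows essentially the same route as the paper's proof: both rest on the facts that $U_d^{\nzs}=U_d^{\zs}$ and that, for fixed $\mixedstratD$, the Attacker's payoff in $\zs$ differs from that in $\nzs$ only by the strictly positive factor $\Xi$ and the $\mixedstratA$-independent term $k(\mixedstratD)$, so both players' best responses, and hence the Nash equilibrium sets, coincide. The paper writes this out by transforming the NE inequalities directly, while you phrase it via best-response correspondences, which is merely a presentational difference.
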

\begin{proof}
By definition, a strategy profile $(\mixedNE)$ is NE of $\nzs$ if and only if:
\begin{subequations}
\small
\begin{align}
S(\mixedNE)+k(\mixedstratD^{\NE})\leq 
S(\mixedstratD,\mixedstratA^{\NE})+k(\mixedstratD), \forall 
\mixedstratD\in\Delta_{[R]}, \\
\mathrm{\Xi}\cdot S(\mixedNE)\geq 
\mathrm{\Xi}\cdot S(\mixedstratD^{\NE},\mixedstratA), \forall 
\mixedstratA\in\Delta_{[M]}.
\end{align}
\end{subequations}
Here is the observation:
\begin{equation}
\begin{aligned}
\mathrm{\Xi}\cdot S(\mixedNE)\geq
\mathrm{\Xi}\cdot S(\mixedstratD^{\NE},\mixedstratA), \forall 
\mixedstratA\in\Delta_{[M]}\iff \\
\mathrm{\Xi}\cdot [S(\mixedNE)+k(\mixedstratD^{\NE})]\geq \\
\mathrm{\Xi}\cdot [S(\mixedstratD^{\NE},\mixedstratA)+k(\mixedstratD^{\NE})], 
\forall \mixedstratA\in\Delta_{[M]}.
\end{aligned}
\end{equation}

Since $\mathrm{\Xi}>0$, the latter condition is satisfied if and only if:
\begin{equation}
\small
S(\mixedNE)+k(\mixedstratD^{\NE})\geq 
 S(\mixedstratD^{\NE},\mixedstratA)+k(\mixedstratD^{\NE}), \forall 
\mixedstratA\in\Delta_{[M]}.
\end{equation}
In short, $(\mixedNE)$ is a NE of $\nzs$, if and only if 
it satisfies:
\begin{subequations}
\medmuskip=0mu
\thinmuskip=0mu
\thickmuskip=0mu
\begin{align}
S(\mixedNE)+k(\mixedstratD^{\NE})\leq 
S(\mixedstratD,\mixedstratA^{\NE})+k(\mixedstratD), \forall 
\mixedstratD\in\Delta_{[R]}, \\
S(\mixedNE)+k(\mixedstratD^{\NE})\geq 
S(\mixedstratD^{\NE},\mixedstratA)+k(\mixedstratD^{\NE}), \forall 
\mixedstratA\in\Delta_{[M]}.
\end{align}
\end{subequations}
But these are exactly the conditions describing a NE of 
$\zs$.~Therefore $\setne_{\nzs}=\setne_{\zs}$.
\end{proof}

\begin{lemma}
\label{lemma:nonzerosum_ne_maximin}
In $\nzs$, the set of NE and Maximin strategies of the Defender are equivalent, i.e.~$\setne_{\nzs}=\setmaximin_{\nzs}$.~
\end{lemma}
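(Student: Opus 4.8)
The plan is to deduce the statement by chaining three facts that are already available. First, the Defender's payoff is literally the same object in both games: the excerpt records $U_d^{\nzs}(\mixedstratD,\mixedstratA)=U_d^{\zs}(\mixedstratD,\mixedstratA)$, so the maximin program $\max_{\mixedstratD\in\Pi_{[R]}}\min_{\mixedstratA\in\Pi_{[M]}}U_d(\mixedstratD,\mixedstratA)$ and its set of optimizers are identical in $\nzs$ and $\zs$; hence $\setmaximin_{\nzs}=\setmaximin_{\zs}$. Second, because $\zs$ is a finite two-person zero-sum game the minimax theorem \cite{minimax} applies, and the saddle-point result already stated for $\zs$ gives that the set of Defender NE strategies in $\zs$ coincides with the set of Defender security (maximin) strategies, i.e. $\setne_{\zs}=\setmaximin_{\zs}$. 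Third, Lemma \ref{lemma:nes} gives $\setne_{\nzs}=\setne_{\zs}$. Stitching these together yields $\setne_{\nzs}=\setne_{\zs}=\setmaximin_{\zs}=\setmaximin_{\nzs}$, which is the claim.

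The only step requiring genuine care is the middle equality $\setne_{\zs}=\setmaximin_{\zs}$, so I would spell it out. For $\setne_{\zs}\subseteq\setmaximin_{\zs}$: take a NE $(\mixedNE)$ of $\zs$; since $\mixedstratA^{\NE}$ is a best response it minimizes $U_d^{\zs}(\mixedstratD^{\NE},\cdot)$, so $\min_{\mixedstratA}U_d^{\zs}(\mixedstratD^{\NE},\mixedstratA)=U_d^{\zs}(\mixedNE)$, and by the saddle-point property this common value equals $\max_{\mixedstratD}\min_{\mixedstratA}U_d^{\zs}(\mixedstratD,\mixedstratA)$; thus $\mixedstratD^{\NE}$ attains the outer maximum and is a maximin strategy. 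For the reverse inclusion: take a maximin strategy $\maximinD$ of the Defender together with a maximin strategy $\maximinA$ of the Attacker in $\zs$ (equivalently, a minimax strategy for the Defender's payoff); the minimax theorem makes $(\maximinD,\maximinA)$ a saddle point, hence a NE of $\zs$, so $\maximinD\in\setne_{\zs}$. Multiplicity of equilibria is not an obstacle, because the equilibrium set of a zero-sum game is a product set with the ordered-interchangeability property already noted for the saddle points of $\zs$, so its Defender-marginal is exactly $\setmaximin_{\zs}$.

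A fully self-contained alternative, avoiding the zero-sum detour, would work directly from (\ref{eq:util_att_def}): writing $U_d^{\nzs}(\mixedstratD,\mixedstratA)=-\tfrac{1}{\Xi}U_a^{\nzs}(\mixedstratD,\mixedstratA)-k(\mixedstratD)$ with $k(\mixedstratD)$ independent of $\mixedstratA$, the Attacker's best response to a fixed $\mixedstratD$ is precisely the $\mixedstratA$ minimizing $U_d^{\nzs}(\mixedstratD,\cdot)$; feeding this into the best-response characterization of Lemma \ref{lemma:mixed_ne} shows that every Defender NE strategy solves $\max_{\mixedstratD}\min_{\mixedstratA}U_d^{\nzs}$ and conversely. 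I expect the zero-sum route to be shorter and would adopt it, keeping the direct argument only as a remark. The ``hard part'' here is therefore not a new estimate but a matter of precision: being explicit that an Attacker best response drives the Defender's payoff down to its minimum over $\Pi_{[M]}$, and that the common saddle value equals the maximin value, so that the equivalence of the three equilibrium notions is transparent.
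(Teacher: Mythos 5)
Your proposal is correct and follows essentially the same route as the paper's proof: invoke Lemma \ref{lemma:nes} for $\setne_{\nzs}=\setne_{\zs}$, use the zero-sum coincidence $\setne_{\zs}=\setmaximin_{\zs}$, and note that the Defender's utility and strategy space are identical across the two games so that $\setmaximin_{\zs}=\setmaximin_{\nzs}$. The only difference is presentational — you chain equalities (and spell out the standard zero-sum NE/maximin equivalence, which the paper simply cites) rather than arguing the two inclusions separately as the paper does.
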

\begin{proof}
	($\Rightarrow$)~Since $\zs$ is a two person zero-sum game, we know that the set of NE and Maximin strategies of the Defender are the same, i.e.~$\setne_{\zs}=\setmaximin_{\zs}$.~Let $(\mixedNE) \in \setne_{\nzs}$ then based on Lemma \ref{lemma:nes} we have that $(\mixedNE) \in \setne_{\zs}$.~Since $\zs$ is zero-sum, $\mixedstratD^{\NE} \in \setmaximin_{\zs}$.~But the strategy spaces and the utility of the Defender are the same in both $\nzs$ and $\zs$.~Hence the conditions for a mixed strategy to be a Defender's Maximin is the same in both games.~Therefore, $\mixedstratD^{\NE} \in \setmaximin_{\nzs}$, i.e.~$\setne_{\nzs}\subseteq\setmaximin_{\nzs}$.~\\
($\Leftarrow$)
The argument goes in the other direction as well: consider $\mixedstratD^{\NE} \in \setmaximin_{\nzs}$.~Since the utility of the Defender and the strategy spaces are the same across the two games, for the same strategy $\mixedstratD^{\NE}$, we have that $\mixedstratD^{\NE} \in \setmaximin_{\zs}$.~Since $\zs$ is two-player zero-sum, there exists $\mixedstratA^{\NE}$ such that $(\mixedNE) \in \setne_{\zs}$.~From Lemma \ref{lemma:nes}, this means $(\mixedNE)_{\Gamma} \in \setne$.~Hence, \textit{Maximin strategies of the Defender are also part of her NE strategies in $\nzs$}, i.e.~$\setmaximin_{\nzs}\subseteq \setne_{\nzs}$.~Putting the two together $\setne_{\nzs}=\setmaximin_{\nzs}$.
\end{proof}

This lemma establishes that the Defender can randomise according to her  NE and, in expectation, be guaranteed at least the expected utility prescribed by the NE, irrespective of the mixed strategy of the Attacker.~To put it differently, the Defender can play her  pessimistic maximin strategy, but she does not lose anything in expectation by not playing a NE strategy.~It is worth stressing that this property only holds for the NE strategy of the Defender and not of the Attacker.

\begin{lemma}
\label{lemma:Maxmin_SSE}
In $\nzs$, the set of Maximin and SSE strategies of the Defender are the same, i.e.~$\setmaximin_\nzs=\setsse_\nzs$.
\end{lemma}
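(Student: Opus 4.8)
The plan is to read the result off the affine identity (\ref{eq:util_att_def}), namely $U_d^{\nzs}(\mixedstratD,\mixedstratA)=-\frac{1}{\Xi}\,U_a^{\nzs}(\mixedstratD,\mixedstratA)-k(\mixedstratD)$, where $\frac1\Xi>0$ and $k(\mixedstratD)$ does not depend on $\mixedstratA$. First I would fix an arbitrary Defender strategy $\mixedstratD\in\Delta_{[R]}$ and note that, for this fixed $\mixedstratD$, the map $\mixedstratA\mapsto U_a^{\nzs}(\mixedstratD,\mixedstratA)$ is a strictly decreasing affine function of $\mixedstratA\mapsto U_d^{\nzs}(\mixedstratD,\mixedstratA)$; hence $\arg\max_{\mixedstratA\in\Delta_{[M]}}U_a^{\nzs}(\mixedstratD,\mixedstratA)=\arg\min_{\mixedstratA\in\Delta_{[M]}}U_d^{\nzs}(\mixedstratD,\mixedstratA)$. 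In words, against a committed $\mixedstratD$ every Attacker best response is simultaneously a worst-case response for the Defender, and the Defender's realised payoff is \emph{constant} over the whole best-response set, equal to $\min_{\mixedstratA\in\Delta_{[M]}}U_d^{\nzs}(\mixedstratD,\mixedstratA)$.

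Next I would substitute this into the SSE definition. Since the Defender's payoff does not vary over the Attacker's best responses to $\mixedstratD$, the tie-breaking requirement (\ref{eq:sse_cond2}) holds automatically, and it yields $U_d^{\nzs}(\mixedstratD,\mixedstratA^{\SSE}(\mixedstratD))=\min_{\mixedstratA\in\Delta_{[M]}}U_d^{\nzs}(\mixedstratD,\mixedstratA)$ for every $\mixedstratD$. Feeding this into the leader's optimality condition (\ref{eq:sse_cond3}), a Defender strategy $\mixedstratD^{\SSE}$ is an SSE strategy if and only if it maximises $\mixedstratD\mapsto\min_{\mixedstratA\in\Delta_{[M]}}U_d^{\nzs}(\mixedstratD,\mixedstratA)$ over $\Delta_{[R]}$ --- which is precisely the defining condition of a Defender maximin (security) strategy. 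As this objective is continuous on the compact simplex $\Delta_{[R]}$ the maximum is attained, so both sets are nonempty and $\setmaximin_{\nzs}=\setsse_{\nzs}$. As a consistency check one can chain this with Lemma \ref{lemma:nonzerosum_ne_maximin} to also conclude $\setne_{\nzs}=\setsse_{\nzs}$, so that Nash, maximin, and Stackelberg coincide for the Defender in $\nzs$.

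I expect the only delicate point to be the treatment of the Stackelberg tie-breaking rule: one must argue explicitly that it is exactly the $\mixedstratA$-independence of $k(\mixedstratD)$ that makes the Defender genuinely indifferent among all of the Attacker's best responses, so that the ``favour the Defender'' selection in (\ref{eq:sse_cond2}) cannot improve the leader's realised payoff and the SSE problem collapses to the maximin problem. In a generic non-zero-sum game the leader can do strictly better under SSE than under maximin, and this collapse would fail; here it is the structural relation (\ref{eq:util_att_def}) that rescues it. Everything else is a direct substitution into the definitions, with attainment of the outer maximum guaranteed by compactness of $\Delta_{[R]}$.
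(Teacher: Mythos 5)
Your proof is correct and follows essentially the same route as the paper's: both hinge on the identity $U_d^{\nzs}=-\frac{1}{\Xi}U_a^{\nzs}-k(\mixedstratD)$ to show that, for any committed $\mixedstratD$, the Attacker's best responses coincide with the Defender's worst case, so the tie-breaking clause is vacuous and the leader's problem collapses to the maximin problem. The only cosmetic difference is that you obtain the set equality in one "if and only if" step via the constancy of $U_d$ over the best-response set, whereas the paper argues the two inclusions separately (with a contradiction argument for the tie-break condition); the substance is the same.
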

\begin{proof}
($\Rightarrow$) Let $\mixedstratD^{\NE} \in \setsse_\nzs$  be a SSE strategy of the Defender.~Then by definition, $\mixedstratD^{\NE} $ is (i) an optimal strategy of the Defender given that (ii) the Attacker is best-responding to it but by (iii) breaking ties in favour of the Defender.~That is:
\begin{enumerate}
	\item[(i)]~$\mixedstratD^{\NE} \in\arg\max_{\mixedstratD\in\Delta_{[R]}} U_d(\mixedstratD,\mixedstratA^{\BR}(\mixedstratD))$ where;
	\item[(ii)]~for any $\mixedstratD\in\Delta_{[R]}$, $\mixedstratA^{\BR}(\mixedstratD)\in\arg\max_{\mixedstratA\in\Delta_{[M]}} U_a(\mixedstratD,\mixedstratA)$ and;
	\item[(iii)]~for any $\mixedstratD\in\Delta_{[R]}$:
	\begin{equation}
	\small
	\mixedstratA^{\BR}(\mixedstratD)\in\arg\max_{\mixedstratA\in\arg\max_{\mixedstratA\in\Delta_{[M]}} U_a(\mixedstratD,\mixedstratA)} U_d(\mixedstratD,\mixedstratA).
	\end{equation}
\end{enumerate}
 Let us examine condition (ii): for any $\mixedstratD\in\Delta_{[R]}$: 
 \begin{equation}
 \small
\begin{aligned}
\mixedstratA^{\BR}(\mixedstratD)\in\arg\max_{\mixedstratA\in\Delta_{[M]}} \mathrm{\Xi}\cdot S(\mixedstratD,\mixedstratA) \iff \\ \mixedstratA^{\BR}(\mixedstratD)\in\arg\max_{\mixedstratA\in\Delta_{[M]}} \mathrm{\Xi}\cdot [S(\mixedstratD,\mixedstratA)+k(\mixedstratD)]\\
 \mixedstratA^{\BR}(\mixedstratD)\in\arg\max_{\mixedstratA\in\Delta_{[M]}} S(\mixedstratD,\mixedstratA)+k(\mixedstratD).
\end{aligned}
\end{equation}
In short, condition (ii) is equivalent to: 
\begin{equation*}
\label{cond_2'}	
\text{(iv)~For any}~\mixedstratD\in\Delta_{[R]}, \mixedstratA^{\BR}(\mixedstratD)\in\arg\min_{\mixedstratA\in\Delta_{[M]}} U_d(\mixedstratD,\mixedstratA).
\end{equation*}
This makes condition (iii) irrelevant.~But conditions (i) and (iv) exactly describe a Maximin strategy of the Defender.~Therefore we have proved that $\setsse_{\nzs}\subseteq\setmaximin_{\nzs}$.~($\Leftarrow$) The argument can be established identically in reverse direction, starting from a Maximin strategy of the Defender.~So given conditions (i) and (iv) we must prove that conditions (ii) and (iii) are true.~Let $\mixedstratD^{\NE} \in \setmaximin_\nzs$ be a Maximin strategy of the Defender.~Then by definition, $\mixedstratD^{\NE}$ is (i) an optimal strategy of the Defender given that (iv) the Attacker is minimising Defender's utility.~~We see that condition (ii) is true if and only if condition (iv) is true.~Since the Maximin strategy $\mixedstratD^{\NE}$ makes condition (iv) true, it will also make condition (ii).~To prove that $\mixedstratD^{\NE}$ is an SSE, we also need to prove condition (iii).~Let us assume that the condition is not true.~This means that there is a best-response of the Attacker that does not break ties in favour of the Defender.~Formally, 
\begin{equation}
\medmuskip=0mu
\thinmuskip=0mu
\thickmuskip=0mu
\begin{aligned}	
\mixedstratA^{\BR}(\mixedstratD)\notin\arg\max_{\mixedstratA\in\arg\max_{\mixedstratA} U_a(\mixedstratD,\mixedstratA)} U_d(\mixedstratD,\mixedstratA) \iff \\
\mixedstratA^{\BR}(\mixedstratD)\notin\arg\max_{\mixedstratA\in\arg\max_{\mixedstratA} U_a(\mixedstratD,\mixedstratA)}\big\{-S(\mixedstratD,\mixedstratA)-k(\mixedstratD)\big\} \iff \\
\mixedstratA^{\BR}(\mixedstratD)\notin\arg\min_{\mixedstratA\in\arg\max_{\mixedstratA} U_a(\mixedstratD,\mixedstratA)}\big\{S(\mixedstratD,\mixedstratA)+k(\mixedstratD)\big\}\iff \\
\mixedstratA^{\BR}(\mixedstratD)\notin\arg\min_{\mixedstratA\in\arg\max_{\mixedstratA} U_a(\mixedstratD,\mixedstratA)}S(\mixedstratD,\mixedstratA)\iff \\
\mixedstratA^{\BR}(\mixedstratD)\notin\arg\min_{\mixedstratA\in\arg\max_{\mixedstratA} U_a(\mixedstratD,\mixedstratA)} U_a(\mixedstratD,\mixedstratA),
\end{aligned}
\end{equation}	
which is leads to a contradiction.~Therefore condition (3) holds, and putting together all three conditions (1), (2), and (3), we have that $\mixedstratD^{\NE}$, which is a Maximin strategy of the Defender it is also an SSE strategy, i.e.~$\setmaximin_\nzs\subseteq\setsse_\nzs$.~Putting the two proofs together we have that $\setmaximin_\nzs=\setsse_\nzs$.
\end{proof}

\begin{thm}
\label{theorem:ne_sse_maximin}
In $\nzs$, the set of NE, Maximin and SSE strategies of the Defender are the same,~i.e.~$\setne_{\nzs}=\setmaximin_{\nzs}=\setsse_{\nzs}$.~Besides, all NE are interchangeable, in $\nzs$, and all yield the same utility for the defender.
\end{thm}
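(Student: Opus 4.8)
The plan is to assemble Theorem~\ref{theorem:ne_sse_maximin} entirely from the three lemmas already proved, since each of the required set-equalities has been established individually. First I would invoke Lemma~\ref{lemma:nes} to get $\setne_{\nzs}=\setne_{\zs}$, then Lemma~\ref{lemma:nonzerosum_ne_maximin} to get $\setne_{\nzs}=\setmaximin_{\nzs}$, and then Lemma~\ref{lemma:Maxmin_SSE} to get $\setmaximin_{\nzs}=\setsse_{\nzs}$; chaining these three gives $\setne_{\nzs}=\setmaximin_{\nzs}=\setsse_{\nzs}$ immediately. So the ``set of strategies are the same'' part is essentially a one-line corollary and requires no new argument.

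The remaining content is the interchangeability claim and the ``all yield the same utility for the Defender'' claim. Here I would push the reduction back to $\zs$: by Lemma~\ref{lemma:nes}, a profile $(\mixedstratD^{\NE},\mixedstratA^{\NE})$ is a NE of $\nzs$ exactly when it is a NE of $\zs$, and $\zs$ is a finite two-person zero-sum game. For such a game, the saddle-point/minimax theorem (as recorded in the Saddle point definition–statement quoted from \cite{Basar:Book:1995} and in \cite{minimax}) gives precisely the ordered interchangeability property: if $(\mixedstratD_1,\mixedstratA_1)$ and $(\mixedstratD_2,\mixedstratA_2)$ are both equilibria then so are $(\mixedstratD_1,\mixedstratA_2)$ and $(\mixedstratD_2,\mixedstratA_1)$, and all four attain the common value $\max_{\mixedstratD}\min_{\mixedstratA}U_d^{\zs}(\mixedstratD,\mixedstratA)$. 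Since $U_d^{\nzs}=U_d^{\zs}$ on all mixed profiles by construction (equation~(\ref{eq:util_att_def}) and the surrounding definitions), the same Defender payoff value is attained at every NE of $\nzs$, which is the ``same utility for the Defender'' assertion; and the interchangeability of equilibria in $\zs$ transfers verbatim to $\nzs$ because the NE sets coincide.

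The main (and really the only) obstacle is bookkeeping rather than mathematics: I must be careful that interchangeability is being asserted for the full equilibrium profiles and that the transfer from $\zs$ to $\nzs$ is legitimate for the profiles, not merely the Defender's marginal. Lemma~\ref{lemma:nes} is stated as an equality of Defender strategy \emph{sets}, but its proof actually shows the stronger statement that $(\mixedstratD^{\NE},\mixedstratA^{\NE})$ is a NE of $\nzs$ iff it is a NE of $\zs$ (the conditions derived there are ``exactly the conditions describing a NE of $\zs$''), so I would explicitly note that we use this profile-level equivalence. Given that, the zero-sum interchangeability theorem applies to the $\zs$-equilibria and, through the profile-level correspondence, to the $\nzs$-equilibria; and since $U_d^{\nzs}$ agrees with $U_d^{\zs}$ pointwise, the common value is the same number for the Defender in both games. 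I would write this up as: (1) cite the three lemmas for the triple set-equality; (2) use the profile-level form of Lemma~\ref{lemma:nes} plus the zero-sum saddle-point theorem for interchangeability; (3) observe $U_d^{\nzs}=U_d^{\zs}$ to conclude all NE of $\nzs$ give the Defender the common saddle value.

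\begin{proof}
Combining Lemma~\ref{lemma:nes}, Lemma~\ref{lemma:nonzerosum_ne_maximin}, and Lemma~\ref{lemma:Maxmin_SSE} we obtain directly
\[
\setne_{\nzs}=\setne_{\zs},\qquad
\setne_{\nzs}=\setmaximin_{\nzs},\qquad
\setmaximin_{\nzs}=\setsse_{\nzs},
\]
and chaining these equalities yields $\setne_{\nzs}=\setmaximin_{\nzs}=\setsse_{\nzs}$.

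For interchangeability, recall that the proof of Lemma~\ref{lemma:nes} actually shows the profile-level statement that $(\mixedNE) \in \Pi_{[R]}\times\Pi_{[M]}$ is a NE of $\nzs$ if and only if it is a NE of $\zs$.~The game $\zs$ is a finite two-person zero-sum game, so by the minimax theorem \cite{minimax} (see also the saddle-point property quoted from \cite{Basar:Book:1995}) its equilibria possess the ordered interchangeability property: if $(\mixedstratD_1,\mixedstratA_1)$ and $(\mixedstratD_2,\mixedstratA_2)$ are both NE of $\zs$, then $(\mixedstratD_1,\mixedstratA_2)$ and $(\mixedstratD_2,\mixedstratA_1)$ are NE of $\zs$ as well, and all of them attain the common saddle value
\[
v \;:=\; \max_{\mixedstratD \in \Delta_{[R]}} \min_{\mixedstratA \in \Delta_{[M]}} U_d^{\Gamma_0}(\mixedstratD,\mixedstratA).
\]
By the profile-level equivalence above, the same interchangeability holds verbatim for the NE of $\nzs$.~Finally, by construction $U_d^{\nzs}(\mixedstratD,\mixedstratA)=U_d^{\zs}(\mixedstratD,\mixedstratA)$ for every mixed profile $(\mixedstratD,\mixedstratA)$, so every NE of $\nzs$ yields the Defender exactly the value $v$.~Hence all NE of $\nzs$ are interchangeable and all give the same utility for the Defender.
\end{proof}
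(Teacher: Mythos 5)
Your proposal is correct and follows essentially the same route as the paper: chain the lemmas for $\setne_{\nzs}=\setmaximin_{\nzs}=\setsse_{\nzs}$, then reduce to the zero-sum game $\zs$ via Lemma~\ref{lemma:nes}, invoke interchangeability of saddle points there, and use $U_d^{\nzs}=U_d^{\zs}$ to transfer the common Defender value back to $\nzs$. Your explicit remark that Lemma~\ref{lemma:nes} holds at the level of full profiles (not just Defender marginals) is a welcome tightening of the same argument, not a different approach.
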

\begin{proof}
Trivially, from Lemmas \ref{lemma:nonzerosum_ne_maximin} and \ref{lemma:Maxmin_SSE} we have that $\setne_{\nzs}=\setmaximin_{\nzs}=\setsse_{\nzs}$.~Since $\zs$ is a two person zero-sum game, we know that all NE are interchangeable \cite{Basar:Book:1995}.~From Lemma \ref{lemma:nes} the NE of $\zs$ are the NE of $\nzs$ and vice-versa.~We also see that the utility of the Defender is the same across $\nzs$ and $\zs$.~Therefore the utility of the Defender in all NE of our original game is the same, which also implies that all NE of our original game are interchangeable.
\end{proof}

The above lemma establishes that the Defender, regardless of whether the Attacker conducts surveillance, she plays optimally when she randomises according to her  NE strategy.~

\begin{thm}
\label{theorem:zerosum_ne_optimal}
Regardless of the type of malware detection game played, i.e.
\begin{enumerate}
\item a zero sum or a non-zero sum malware detection game,
\item a Nash or a Stackelberg malware detection game,
\end{enumerate}
the Defender plays optimally by choosing any strategy $\mixedstratD \in \setne_{\zs}$.~
\end{thm}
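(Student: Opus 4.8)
The plan is to reduce the claimed statement entirely to results already proved in the excerpt, so that Theorem \ref{theorem:zerosum_ne_optimal} follows by assembling Lemmas \ref{lemma:nes}, \ref{lemma:nonzerosum_ne_maximin}, \ref{lemma:Maxmin_SSE} and Theorem \ref{theorem:ne_sse_maximin} together with the definition of a saddle point and the minimax theorem. First I would recall the four strategic solution concepts in play for the Defender: $\setne$, $\setmaximin$, $\setsse$, and note that these live a priori in two different games, $\zs$ and $\nzs$, that share the same Defender strategy space $\Pi_{[R]}$ and the same Defender utility. So the object is really to show that one fixed set of Defender strategies, namely $\setne_{\zs}$, is simultaneously optimal under all four combinations of (zero sum / non-zero sum) $\times$ (Nash / Stackelberg).

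Next I would handle the four cells of the $2\times 2$ table one at a time, each essentially a one-line invocation. For the \emph{zero sum Nash} cell, a Defender strategy in $\setne_{\zs}$ is, by the saddle-point definition and the minimax theorem quoted in the excerpt, exactly a maximin (security) strategy of the Defender in $\zs$, hence optimal there. For the \emph{non-zero sum Nash} cell, Lemma \ref{lemma:nes} gives $\setne_{\nzs}=\setne_{\zs}$, so the same strategies are the NE strategies of $\nzs$; since the Defender's utility is identical across the two games, these strategies yield the Defender her NE utility in $\nzs$, and by Theorem \ref{theorem:ne_sse_maximin} all NE of $\nzs$ are interchangeable and give the Defender the same utility, so there is no better Defender strategy to deviate to. For the \emph{non-zero sum Stackelberg} cell I would use Lemma \ref{lemma:Maxmin_SSE} together with Lemma \ref{lemma:nonzerosum_ne_maximin}: $\setne_{\zs}=\setne_{\nzs}=\setmaximin_{\nzs}=\setsse_{\nzs}$ (this chain is precisely Theorem \ref{theorem:ne_sse_maximin} plus Lemma \ref{lemma:nes}), so any $\mixedstratD\in\setne_{\zs}$ is an SSE strategy of the Stackelberg MDG with Attacker utility a positive scaling of $S$, i.e.\ optimal when the Attacker conducts surveillance. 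For the \emph{zero sum Stackelberg} cell I would observe that in a two-player zero sum game the Stackelberg leader's optimal commitment coincides with her maximin strategy (committing to a maximin strategy already forces the adversary, whose best response is to minimise the leader's payoff, into the minimax value, which by the minimax theorem equals the maximin value; tie-breaking cannot help the Defender beyond this value since the Attacker's best responses all achieve the saddle value), hence again exactly $\setmaximin_{\zs}=\setne_{\zs}$.

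Finally I would stitch the four cells together with one sentence: in every case the set of Defender strategies that is optimal contains (indeed equals) $\setne_{\zs}$, so choosing any $\mixedstratD\in\setne_{\zs}$ is optimal regardless of the game variant; this is the assertion of the theorem. The main obstacle I anticipate is the \emph{zero sum Stackelberg} cell, because it is the only one not already packaged as a lemma in the excerpt: one must argue carefully that in a finite two-person zero sum game the SSE Defender strategy set coincides with the maximin set, including the tie-breaking subtlety. The cleanest route is to note that for any Defender commitment $\mixedstratD$ every Attacker best response attains $\min_{\mixedstratA}U_d(\mixedstratD,\mixedstratA)$ (since in the zero sum game the Attacker maximises $-U_d$, i.e.\ minimises $U_d$), so the SSE tie-breaking term equals $\min_{\mixedstratA}U_d(\mixedstratD,\mixedstratA)$ independently of which best response is selected; the Defender's Stackelberg problem is therefore $\max_{\mixedstratD}\min_{\mixedstratA}U_d(\mixedstratD,\mixedstratA)$, which is the maximin problem, and its value equals the minimax value by the minimax theorem of the excerpt. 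Everything else is bookkeeping over the already-established equalities $\setne_{\zs}=\setne_{\nzs}=\setmaximin_{\nzs}=\setsse_{\nzs}$ and $\setne_{\zs}=\setmaximin_{\zs}$.
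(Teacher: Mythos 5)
Your proposal is correct and follows essentially the same route as the paper: the paper's proof is a one-liner that combines Lemma \ref{lemma:nes} with Theorem \ref{theorem:ne_sse_maximin} to obtain the chain $\setne_{\zs}=\setne_{\nzs}=\setmaximin_{\nzs}=\setsse_{\nzs}$, which is exactly the bookkeeping you describe for the Nash and non-zero-sum Stackelberg cells. The one place you go beyond the paper is the zero-sum Stackelberg cell, which the paper's proof leaves implicit; your argument that in $\zs$ every Attacker best response to a committed $\mixedstratD$ attains $\min_{\mixedstratA}U_d(\mixedstratD,\mixedstratA)$ (so tie-breaking is vacuous and the Defender's commitment problem collapses to the maximin problem, whose solutions are $\setmaximin_{\zs}=\setne_{\zs}$ by the minimax theorem) is correct and fills that gap cleanly.
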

\begin{proof}
By combining \ref{lemma:nes} and \ref{theorem:ne_sse_maximin}, we have that 
$\setne_{\zs}=\setne_{\nzs}=\setmaximin_{\nzs}=\setsse_{\nzs}$, which proves the theorem.
\end{proof}

The above theorem demonstrates that it is computationally efficient for the Defender to derive her  optimal strategy by solving the LP represented by (\ref{eq:lp}).~It is worth noting that a similar result but for different problem has been published in \cite{korzhyk2011}.

\section{$i$Routing}\label{irouting}
In this section, we present the $i$Routing protocol, which stands for \emph{intelligent Routing} and whose routing decisions are made according to the \emph{Nash Delivery Plan} (NDP).~$i$Routing has been designed based on the mathematical findings of the MDG analysis, presented in previous sections, and its main goal is to maximise the utility of the Defender in the presence of a ``rational'' Attacker.

Within the realm of Mobile Edge Computing (MEC), devices of the cluster request services from the cluster-head (denoted by $\clhead$) imposing the need for establishing an end-to-end path between the requestor (i.e.~destination device denoted by $\destination$) and $\clhead$.~Each time data must be delivered to $\destination$, $\clhead$ has to compute the NDP by solving an MDG for this destination.~To do this, following the route discovery, $\clhead$ uses its latest information about the malware detection capabilities of all possible routes to $\destination$, along with their inspection costs (i.e.~malware detection costs to perform, for example, intrusion classification).~Data is then relayed and collaboratively inspected by the devices on its way to $\destination$.~Overall, the objective of $\clhead$ (i.e.~the Defender) is to select the route that can correctly detect and filter out malicious data before they infect $\destination$ by making sure that it is not crafted with malware.~We assume that each device must use its data inspection capabilities at the maximum possible degree..

$i$Routing has characteristics of \emph{reactive route selection protocols}, meaning that it takes action and starts computing routing paths that have not been previously computed when a request for data delivery to $\destination$ is issued.~$i$Routing requires to obtain information about the malware inspection capabilities and the associated computational cost of devices, in routes from $\clhead$ to $\destination$.

\begin{algorithm}
\begin{algorithmic}[1]
\Procedure{$i$Routing\_Request}{$s,\destination,\mc{S}_j$}
	\State $s$ seeks routes to $\destination$ by broadcasting $\tt RREQ_{\destination}$;
	\If{a device $s_i$ receives $\tt RREQ_{\destination}$}
		\State $\mc{S}_j \cup \{s_i\}$;
		\If{$s_i \neq \destination$}
			\State $s_i$ executes \textsc{$i$Routing\_Request}(${s_i,\destination},\mc{S}_j$);
		\Else
			\State $L \gets |\mc{S}_j|$, $n\gets0,\mc{T}_j\gets \emptyset,\mc{C}_j\gets \emptyset$;
			\State \textsc{$i$Routing\_Response}($n,L,\mc{T}_j,\mc{C}_j,\mc{S}_j,\destination$);
			\State break;
		\EndIf
	\EndIf
\EndProcedure
\caption{Seeking routes to destination $\destination$.}
\label{alg_irouting_request}
\end{algorithmic}
\end{algorithm}

$i$Routing consists of \emph{three main phases}, which we describe in more detail in the remainder of this section.~In the first phase of the protocol (described in Algorithm \ref{alg_irouting_request}),~$\clhead$ \emph{broadcasts} a Route REQuest ($\tt RREQ_{\destination}$) to discover routes towards $\destination$.~Each device that receives the $\tt RREQ_{\destination}$), acts similarly by broadcasting it towards $\destination$.~After $\clhead$ sends a $\tt RREQ_{\destination}$, it has to await for some timeout $T_{req}$, which is set equal to the Net Traversal Time (NetTT), as in AODV \cite{Perkins:2003:AODV:RFC3561}.

The second phase of the protocol starts when the receiving device is $\destination$.~Then, this device does not forward the request any further.~Instead, it prepares a Route REPly ($\tt RREP_{\destination}$), and sends it back towards $\clhead$ by using the reverse route, which is built during the delivery of $\tt RREQ_{\destination}$, as described by Algorithm \ref{alg_irouting_response}.~Each $\tt RREP_{\destination}$ carries information about: (i) the set $\mc{S}_j$ of devices that comprise a route; (ii) the set $\mc{T}_j$ of vectors of ``failing-to-detect'' probabilities, for different malware, of devices in $r_j$; and (iii) the set $\mc{C}_j$ of computational malware inspection costs $c(s_i)$ of devices in $r_j$.~These values are updated while the $\tt RREP_{\destination}$ is traveling back to $\clhead$.~When each device (e.g.~$s_i$) that is involved in the route response phase, receives the $\tt RREP_{\destination}$, it updates $\mc{T}_j$ and $\mc{C}_j$.~Within the time period $T_{req}$, $\clhead$ aggregates $\tt RREP_{\destination}$ messages and updates its routing table with information that can be used to derive the \emph{optimal routing strategy}, as dictated by Theorem \ref{theorem:zerosum_ne_optimal}.

\begin{algorithm}[t]
\begin{algorithmic}[1]
\Procedure{$i$Routing\_Response}{$n,L,\mc{T}_j,\mc{C}_j,\mc{S}_j,s$}
\State $s$ sends $\tt RREP_{\destination}$ to the ($L-n$)-th device of $\mc{S}_j$, let it be $s_i$; 
	\If{$s_i \ne \clhead$}
	   \State $\mc{T}_j \cup \vec{p}(s_i)$, $\mc{C}_j \cup c(s_i)$, $n\gets n+1$;	
		\State \textsc{$i$Routing\_Response}($n,L,\mc{T}_j,\mc{C}_j,\mc{S}_j,s_i$);	
	\Else
		\State Execute \textsc{$i$Routing}(${\destination},\data,\mc{S}_j,\mc{T}_j,\mc{C}_j$);	
		\State break;
	\EndIf
\EndProcedure
\caption{Responding to a cluster-head with a route to $\destination$.}
\label{alg_irouting_response}
\end{algorithmic}
\end{algorithm}

\begin{algorithm}[t]
\begin{algorithmic}[1]
\Procedure{$i$Routing}{${\destination},\data,\mc{S}_j,\mc{T}_j,\mc{C}_j$}
	\State $\clhead$ derives the \emph{Nash Delivery Plan}, $\mixedstratD^{\NE}$ using $\mc{S}_j,\mc{T}_j,\mc{C}_j$;
	\State $\clhead$ chooses $r^*$ probabilistically as dictated by $\mixedstratD^{\NE}$; 
	\State $\clhead$ delivers $\data$ to $\destination$ over $r^*$; 
	\State Each device $s_i \in r^*$ performs data inspection;
	\If{$\data$ found to carry malware}
		\State $s_i$ drops $\data$;
		\State $s_i$ notifies $\clhead$ by sending a notification message along the reverse path;
		\State $\clhead$ blacklists the device that sent, through the cloud, $\data$ consisting of malware;
	\Else
		\State $s_i$ forwards $\data$ to $\destination$;
	\EndIf
\EndProcedure
\caption{Delivering data to $\destination$.}
\label{alg_irouting}
\end{algorithmic}
\end{algorithm}

In the third phase of the protocol, described in Algorithm \ref{alg_irouting}, $\clhead$ uses its routing table to solve the MDG by computing the \emph{Nash Delivery Plan}, denoted by $\mixedstratD^{\NE}$, which has a lifetime $T$.~Then, $\clhead$ probabilistically selects a route according to $\mixedstratD^{\NE}$ to deliver the requested data to $\destination$.~The chosen route is denoted by $r^*$.~Note that for the same $\destination$ and before $T$ expires, $\clhead$ uses the same $\mixedstratD^{\NE}$ to derive $r^*$, upon a new \textsc{Request}.

Also, the third phase focuses on detecting malware injected along with the requested data (denoted by $\data$) to prevent the infection of $\destination$.~While $\data$ is delivered to $\destination$ over $r^*$, the relay devices, on $r^*$, perform data inspection auditing $\data$ for malware.~Upon successful detection, the device that detects the malware, first drops $\data$, and then notifies $\clhead$ that $\data$ was crafted with malware.~The notification message is sent along the reverse path.~When receiving this, $\clhead$ blacklists the device that has originally sent $\data$ (this device is assumed that has hijacked the communication link between MEC server and the cluster-head).~This can be seen as the first step towards mitigating the investigated attack model and anything beyond that is out of the scope of this paper.

While each data $\data$ is collaboratively inspected by the devices on its way to $\destination$, the derivation of the \emph{optimal routing strategy}, i.e.~the Nash Delivery Plan (NDP), is computed only by $\tt{C}$ through solving a Malware Detection Game (MDG) for this specific destination $\destination$.~Therefore, even if the other devices are aware of the existence of some infected data, it is only $\tt{C}$ that isolates the Attacker (i.e.~data source) towards mitigating future malware infection risks.

The communications complexity of the $i$Routing protocol measured in terms of number of messages exchanged in performing route discovery is $\mathcal{O}(2N)$, where $N$ is the number of devices in the D2D network.~As a reactive routing protocol, \textit{i}Routing has higher storage complexity than conventional routing protocols, but it supports multiple-path routing and QoS routing making malware detection optimal, as shown in section \ref{analysis}.~Finally, \textit{i}Routing has a time complexity equal to $\mathcal{O}(2D)$, where $D$ is the diameter of the D2D network.

\section{Simulations}
\label{simulations}

\subsection{Network setup}
We have conducted a series of simulations to evaluate the performance of the optimal strategies in D2D networks.~Devices have been randomly deployed inside a rectangular area of 1000m x 1000m.~For each device, the transmission power is fixed, and the maximum transmission range is 200m, while two devices can directly communicate with each other only if they are in each other’s transmission range.~We have performed the simulations using the OMNeT++ network simulator and INET framework.~We have simulated the IEEE 802.11 MAC layer protocol and devices send UDP traffic.~In the simulations, the requestor of some data is chosen randomly, and the total number of devices of a \emph{cluster} is set to be 20.~The total simulation time varies (10,~20,~40,~60,~120 seconds) to confirm the consistency of results.~Table \ref {tab_parameters} summarizes the simulation parameters.

\begin{table}[t]
\centering
\footnotesize
\caption{Simulation parameter values}
\label{tab_parameters}
\renewcommand*{\arraystretch}{1.2}
\begin{tabular}{| l | p{4.5cm} |}
\hline \textbf{Parameter} & \textbf{Value} \\ \hline \hline
	Number of nodes  &  20  \\
	\rowcolor{gray!15}
	Mobility model  &Linear Mobility  \\
	Mobility Speed  & 10 m/s   \\ 
	\rowcolor{gray!15}
	Mobility Update Interval &0.1 s \\
	Packet size& 512 bytes\\
	\rowcolor{gray!15}
	Packet generation rate & 2 packets/s\\
	Simulation time & 600 s\\
	\hline
\end{tabular}
\end{table}

\subsection{Security controls and malware}
Simulations consider one adversary who is injecting a sequence of consecutive malicious replies with the aim to infect $\destination$.~We assume that the Attacker chooses to inject one of $[M]=\{$Keylogger, SMS spam, Rootkit iSAM, Spyware, iKee-B, Premium-Rate calls$\}$ malware types (i.e.~pure strategies of the Attacker).~We have also assumed the anti-malware controls, SMS Profiler, iDMA, iTL, and Touchstroke, along with their detection rates, as published in \cite{damopoulos2014}.~Each mobile device is equipped with at least one and up to three anti-malware controls.

\subsection{Attackers}
We have simulated 3 different Attacker types; namely \emph{Uniform}, \emph{Weighted}, and \emph{Nash} Attacker:
\begin{itemize}
\item \emph{Uniform}: the Attacker chooses each malware type from the set with equal probability.~For example for the set we have used here, there is a probability $\frac{1}{6}=0.1667$ the Attacker to choose any of the malware types of $[M]$;
\item \emph{Weighted}: the Attacker chooses a malware type with probability derived by the following algorithm:
\begin{enumerate}
	\item find the average utility value of the Attacker for each column of the game matrix;
	\item add the average utility values of the Attacker for all columns to get the combined sum;
	\item for each malware type, derive the probability of a malware type to be chosen by dividing its average utility value, found in step 1, by the sum derived in step 2.
\end{enumerate}
\item \emph{Nash}: the Attacker plays according to her Nash strategy $\mixedstratA^{NE}$.
\end{itemize}
Per $\reply$, the simulator chooses an attack sample from the attack probability distribution which is determined by the Attacker profile.

We have introduced different probability distributions for each Attacker type, only for testing purposes.~Nevertheless, $i$Routing is optimal regardless of the probability distribution of a malware type to be chosen by the Attacker; a petition that is formally consolidated by the mathematical results presented in sections~\ref{solutions} and \ref{analysis} as well as the simulation results uncovered in this section.

\subsection{Experiments}
We have considered 5 \emph{Cases} each referring to different simulation times: 10, 20, 40, 60, and 120 mins.~For each Case we have simulated 1,000 replies, which are UDP messages of length 512 bytes with delay limit 100 seconds, for a fixed network topology.~ Yet we refer to the run of the code for the pair $\1\mbox{Case},\#\mbox{replies}\2$ by the term \emph{Experiment}.~We have repeated each Experiment for 10 independent network topologies to get a clear idea of the results' trend.~We do that for all 5 Cases and each type of Attacker profile.~Thus we simulate, in total: $5~\mbox{Cases} \times 1,000~\mbox{replies}  \times~\mbox{10 network topologies} = 50,000~\mbox{replies}$.~

\subsection{Comparisons}
We compare $i$Routing against AODV, DSR, and custom-made routing protocol called \emph{Proportional Routing} (PR), for different Attacker types.

PR is computed as follows.~First, by using the game matrix, the Defender computes the average utility value for each row, let it be 
\begin{equation}
\hat{U}_d(r_j) = \frac{\sum_{m_l=1}^{M} U_d(r_j,m_l)	}{M},~\forall~r_j\in[R].
\end{equation}
Then, the probability of route $r_j$ to be chosen equals:
\begin{equation}
1 - \frac{\hat{U}_d(r_j)}{\sum_{r=1}^{R}\hat{U}_d(r)}.
\end{equation}

According to the results illustrated in Figures~\ref{fig_rate_nash_attacker} - \ref{fig_rate_weighted_attacker}, $i$Routing consistently outperforms the rest of the protocols, in terms of both Defender's \emph{expected utility} and \emph{average detection rate}, for all different simulation times and Attacker types.~The results show that $i$Routing achieves its highest average malware detection rate ($\sim$65\%) against a Uniform Attacker (non-strategic Attacker), and its worst rate against a Weighted Attacker.~In the case of a Nash Attacker, $i$Routing has almost 22\% higher detection rate than PR, 6\% than DSR, while it is twice more efficient (i.e.~$\sim$11\%) than AODV.~For a Weighted Attacker, PR behaves differently as it achieves approximately 6\% lower average detection rate than $i$Routing, in contrast to DSR and AODV, which perform worse, as opposed to the Nash Attacker case, since the difference of their average detection rate compared to $i$Routing becomes double (i.e.$\sim$12\% for DSR and 24\% for AODV).~Finally, for a Uniform Attacker, the difference, in terms of detection rate, compared to $i$Routing, is almost the same for both DSR and PR, which is approximately equivalent to 8\%.~AODV still has the worst average detection rate among all protocols by having 24\% worse rate than $i$Routing.

%=========RESULTS FOR DETECTION RATE================
\begin{figure}[H]
\centering
\begin{tikzpicture}[scale=0.9]
\begin{axis}[
legend pos=north east,
enlargelimits={abs=0.5},
ybar=0pt,
bar width=0.2,
xlabel={Time (mins)},
ylabel={Detection rate (\%)},
xtick={0.5,1.5,...,5.5},
xticklabels={10,20,40,60,120},
x tick label as interval,
xmajorgrids,
ymajorgrids,
enlarge y limits=0.5,
]
\addplot+[draw=black,fill=black!70,error bars/.cd,
y dir=both,y explicit]
coordinates {
    (1,60.5) 
    (2,58.6) 
    (3,45.3) 
    (4,47)
    (5,46.6)};
\addplot+[draw=black,fill=black!50,error bars/.cd,
y dir=both,y explicit]
coordinates {
    (1,61.4)
    (2,60)
    (3,58.7) 
    (4,57.3) 
    (5,50)};
\addplot+[draw=black,fill=black!25,error bars/.cd,
y dir=both,y explicit]
coordinates {
    (1,33) 
    (2,41.2) 
    (3,43) 
    (4,43) 
    (5,45.3)};
\addplot+[draw=black,fill=black!10,error bars/.cd,
y dir=both,y explicit]
coordinates {
    (1,66.7) 
    (2,68.8) 
    (3,61.4) 
    (4,58.3) 
    (5,60)};
\legend{\footnotesize AODV,\footnotesize DSR,\footnotesize PR,\footnotesize $i$Routing}
\end{axis}
\end{tikzpicture}
\caption{Malware detection rate in presence of a Nash attacker.}\label{fig_rate_nash_attacker}
\end{figure}
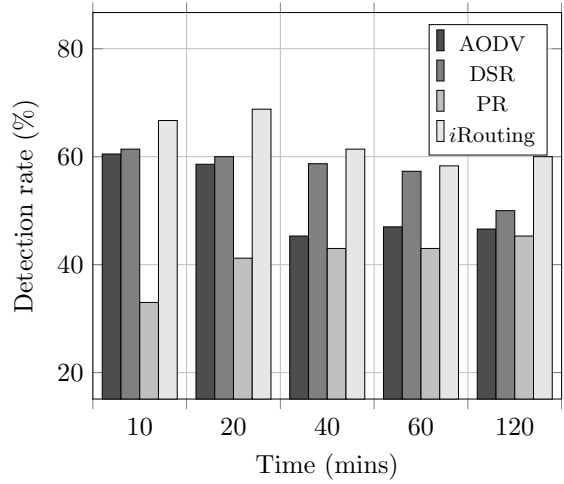

\begin{figure}[H]
\centering
\begin{tikzpicture}[scale=0.9]
\begin{axis}[
legend pos=north east,
enlargelimits={abs=0.5},
ybar=0pt,
bar width=0.2,
xlabel={Time (mins)},
ylabel={Detection rate (\%)},
xtick={0.5,1.5,...,5.5},
xticklabels={10,20,40,60,120},
x tick label as interval,
xmajorgrids,
ymajorgrids,
enlarge y limits=0.5,
]

\addplot+[draw=black,fill=black!70,error bars/.cd,
y dir=both,y explicit]
coordinates {
    (1,44.3)
    (2,38.8) 
    (3,39) 
    (4,40) 
    (5,35.5)};
\addplot+[draw=black,fill=black!50,error bars/.cd,
y dir=both,y explicit]
coordinates {
    (1,55.6) 
    (2,63) 
    (3,60) 
    (4,59.3) 
    (5,57)};
\addplot+[draw=black,fill=black!25,error bars/.cd,
y dir=both,y explicit]
coordinates {
    (1,53)
    (2,59) 
    (3,57) 
    (4,57) 
    (5,55)};
\addplot+[draw=black,fill=black!10,error bars/.cd,
y dir=both,y explicit]
coordinates {
    (1,75.7) 
    (2,70) 
    (3,62.5) 
    (4,60) 
    (5,61)};
\legend{\footnotesize AODV,\footnotesize DSR,\footnotesize PR,\footnotesize $i$Routing}
\end{axis}
\end{tikzpicture}
\caption{Malware detection rate in presence of a Uniform attacker.}\label{fig_rate_uniform_attacker}
\end{figure}
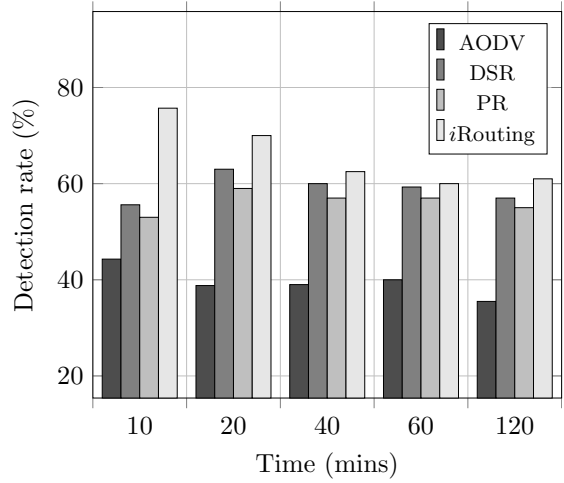

\begin{figure}[H]
\centering
\begin{tikzpicture}[scale=0.9]
\begin{axis}[
legend pos=north east,
enlargelimits={abs=0.5},
ybar=0pt,
bar width=0.2,
xlabel={Time (mins)},
ylabel={Detection rate (\%)},
xtick={0.5,1.5,...,5.5},
xticklabels={10,20,40,60,120},
x tick label as interval,
xmajorgrids,
ymajorgrids,
enlarge y limits=0.5,
]

\addplot+[draw=black,fill=black!70,error bars/.cd,
y dir=both,y explicit]
coordinates {
    (1,44.3) 
    (2,36.4) 
    (3,33.7) 
    (4,38.8) 
    (5,34)};
\addplot+[draw=black,fill=black!50,error bars/.cd,
y dir=both,y explicit]
coordinates {
    (1,55)
    (2,57)
    (3,40)
    (4,47)
    (5,50)};
\addplot+[draw=black,fill=black!25,error bars/.cd,
y dir=both,y explicit]
coordinates {
    (1,56.7)
    (2,58.7)
    (3,54.1)
    (4,50) 
    (5,57)};
\addplot+[draw=black,fill=black!10,error bars/.cd,
y dir=both,y explicit]
coordinates {
    (1,65) 
    (2,65.5)
    (3,60)
    (4,57)
    (5,60)};
\legend{\footnotesize AODV,\footnotesize DSR,\footnotesize PR,\footnotesize $i$Routing}
\end{axis}
\end{tikzpicture}
\caption{Malware detection rate in presence of a Weighted attacker.}
\label{fig_rate_weighted_attacker}
\end{figure}
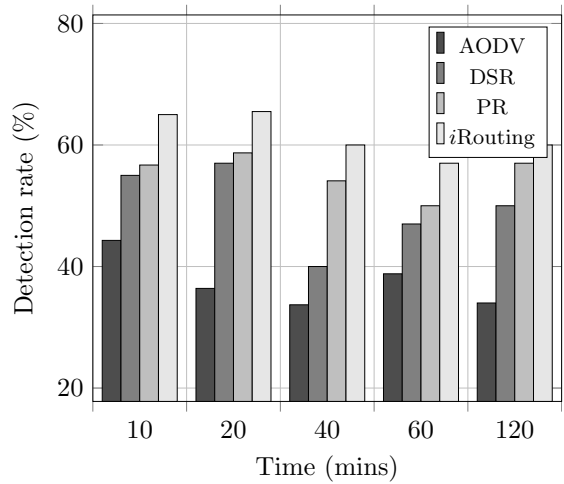

According to Figures \ref{fig_Ud_nash_attacker} - \ref{fig_Ud_weighted_attacker}, $i$Routing achieves the best performance in terms of average expected utility among all protocols.~More specifically, $i$Routing improves the average expected utility, in the case of a Nash Attacker, by, in average, 49\%, 17\%, and 7\% compared to PR, AODV, and DSR, respectively.~We notice that the Defender's utility in $i$Routing is similar to the one achieved when DSR is used.~The reason for this is that DSR improves computational cost as opposed to $i$Routing more than AODV and PR while exhibiting the best detection rate among AODV and PR.~Average improvement values are slightly more pronounced for a non-strategic Uniform Attacker; 16\%, 68\%, and 37\%, as opposed to the same protocols.~The situation is similar for a Weighted Attacker, in which case the corresponding improvement values are 18\%, 53\%, and 20\%.~We also notice that the behaviour of all protocols but $i$Routing is stochastic despite of $i$Routing having steadily the best performance.~

%========RESULTS FOR UTILITY OF DEFENDER================

%===NASH ATTACKER================
\begin{figure}[H]
\centering
	\begin{tikzpicture}[scale=0.6]
	\begin{axis}[title={\large Nash attacker}, thick,width=\linewidth, xlabel=time (mins), ylabel={\large $U_d$}, legend pos=south west, xmajorgrids, ymajorgrids]
	
		\addplot[dotted, very thick, blue] table[x=T, y=AODV, col sep=comma]{results/PayoffNashAtt.csv};\addlegendentry{AODV}
		
		\addplot[dashed, thick, black] table[x=T, y=DSR, col sep=comma]{results/PayoffNashAtt.csv};\addlegendentry{DSR}
		
		\addplot[dash pattern=on .8pt off 3pt on 4pt off 3pt, thick, cyan] table[x=T, y=PR, col sep=comma]{results/PayoffNashAtt.csv};\addlegendentry{PR}
		
		\addplot[red] table[x=T, y=IRR, col sep=comma]{results/PayoffNashAtt.csv};\addlegendentry{iRouting}
		
	\end{axis}
	\end{tikzpicture}
	\caption{Utility of the Defender in presence of a Nash attacker.}\label{fig_Ud_nash_attacker}
\end{figure}
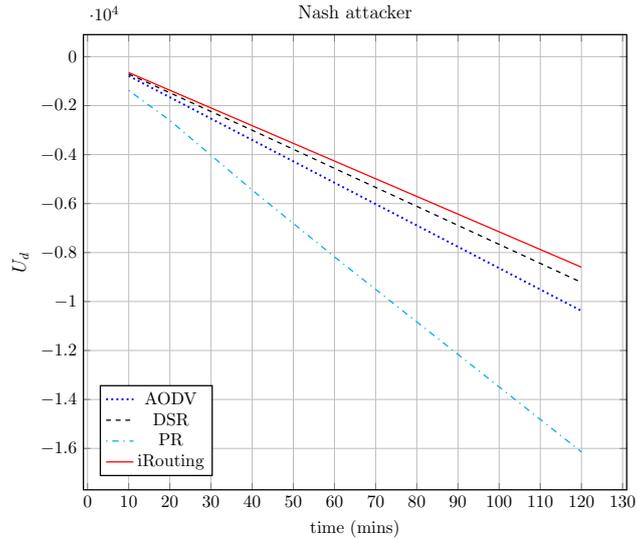

%===UNIFORM ATTACKER================
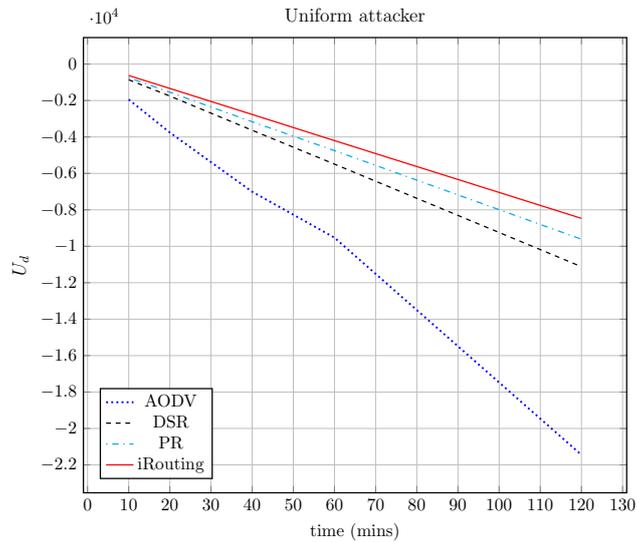
\begin{figure}[H]
\centering
	\begin{tikzpicture}[scale=0.6]
	\begin{axis}[title={\large Uniform attacker}, thick,width=\linewidth, xlabel=time (mins), ylabel={\large $U_d$}, legend pos=south west, xmajorgrids, ymajorgrids]
	
		\addplot[dotted, very thick, blue] table[x=T, y=AODV, col sep=comma]{results/PayoffUniformAtt.csv};\addlegendentry{AODV}
	
		\addplot[dashed, thick, black]
		table[x=T, y=DSR, col sep=comma]{results/PayoffUniformAtt.csv};\addlegendentry{DSR}
		
		\addplot[dash pattern=on .8pt off 3pt on 4pt off 3pt, thick, cyan]  table[x=T, y=PR, col sep=comma]{results/PayoffUniformAtt.csv};\addlegendentry{PR}
		
		\addplot[red] table[x=T, y=IRR, col sep=comma]{results/PayoffUniformAtt.csv};\addlegendentry{iRouting}
		
	\end{axis}
	\end{tikzpicture}
	\caption{Utility of the Defender in presence of a Uniform attacker.}\label{fig_Ud_uniform_attacker}
\end{figure}

%===WEIGHTED ATTACKER================
\begin{figure}[H]
\centering
	\begin{tikzpicture}[scale=0.6]
	\begin{axis}[title={\large Weighted attacker}, thick,width=\linewidth, xlabel=time (mins), ylabel={\large $U_d$}, legend pos=south west, xmajorgrids, ymajorgrids]
	
		\addplot[dotted, very thick, blue] table[x=T, y=AODV, col sep=comma]{results/PayoffWeightedAtt.csv};\addlegendentry{AODV}
		
		\addplot[dashed, thick, black] table[x=T, y=DSR, col sep=comma]{results/PayoffWeightedAtt.csv};\addlegendentry{DSR}
		
		\addplot[dash pattern=on .8pt off 3pt on 4pt off 3pt, thick, cyan] table[x=T, y=PR, col sep=comma]{results/PayoffWeightedAtt.csv};\addlegendentry{PR}
		
		\addplot[red] table[x=T, y=IRR, col sep=comma]{results/PayoffWeightedAtt.csv};\addlegendentry{iRouting}
	\end{axis}
	\end{tikzpicture}
	\caption{Utility of the Defender in presence of a Weighted attacker.}\label{fig_Ud_weighted_attacker}
\end{figure}
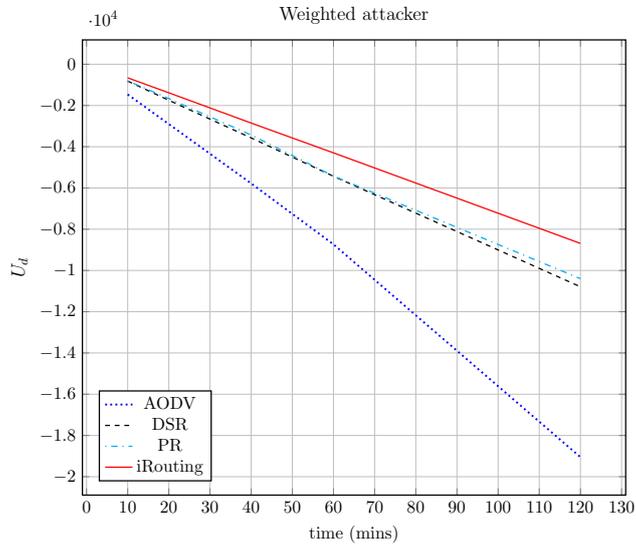

\section{Conclusion}\label{conclusion}
In this paper, we have formally investigated how to select an end-to-end path to deliver data from a source to a destination in device-to-device networks under a game theoretic framework.~We assume the presence of an external adversary who aims to infect ``good'' network devices with malware.~First, a simple yet illuminating two-player security game, between the network (the Defender) and an adversary, is studied.~To devise optimal routing strategies, optimality analysis has been undertaken for different types of games to prove, \emph{in theory}, that there is a Nash equilibrium strategy that always makes the Defender better-off.~The analysis has shown that the expected security damage that can be inflicted by the \emph{Attacker} is bounded and limited when the proposed strategy is used by the Defender.~Network simulation results have also illustrated, \emph{in practice}, that the proposed strategy can effectively mitigate malware infection.~In future work, we intend to investigate machine learning algorithms (e.g.~boosting) to convert weak learners (e.g.~devices with limited number of anti-malware controls) to strong ones.

\section{References}
\bibliographystyle{elsarticle-num} 
\bibliography{references}

\end{document}